\theoremstyle{plain}
\newtheorem{theorem}{Theorem}[section]
\newtheorem{lemma}[theorem]{Lemma}
\theoremstyle{definition}
\theoremstyle{remark}
\icmltitlerunning{StarSD: One-for-Many Speculative Decoding}
\begin{document}

\twocolumn[
\icmltitle{StarSD: One-for-Many Speculative Decoding}

% It is OKAY to include author information, even for blind
% submissions: the style file will automatically remove it for you
% unless you've provided the [accepted] option to the icml2025
% package.

% List of affiliations: The first argument should be a (short)
% identifier you will use later to specify author affiliations
% Academic affiliations should list Department, University, City, Region, Country
% Industry affiliations should list Company, City, Region, Country

% You can specify symbols, otherwise they are numbered in order.
% Ideally, you should not use this facility. Affiliations will be numbered
% in order of appearance and this is the preferred way.
% \icmlsetsymbol{equal}{*}

\begin{icmlauthorlist}
\icmlauthor{Junhao He}{hku}
\icmlauthor{Feiran You}{hku}
\icmlauthor{Hongyang Du}{hku}
\end{icmlauthorlist}

% \icmlaffiliation{yyy}{Department of XXX, University of YYY, Location, Country}
% \icmlaffiliation{comp}{Company Name, Location, Country}
\icmlaffiliation{hku}{Department of Electrical and Electronic Engineering, The University of Hong Kong, Hong Kong SAR, China}
% \icmlaffiliation{}{}

\icmlcorrespondingauthor{Hongyang Du}{duhy@hku.hk}

% You may provide any keywords that you
% find helpful for describing your paper; these are used to populate
% the "keywords" metadata in the PDF but will not be shown in the document
% \icmlkeywords{Machine Learning, ICML}

\vskip 0.3in
]

% this must go after the closing bracket ] following \twocolumn[ ...
% This command actually creates the footnote in the first column
% listing the affiliations and the copyright notice.
% The command takes one argument, which is text to display at the start of the footnote.
% The \icmlEqualContribution command is standard text for equal contribution.
% Remove it (just {}) if you do not need this facility.
\printAffiliationsAndNotice{}  % leave blank if no need to mention equal contribution
% \printAffiliationsAndNotice{\icmlEqualContribution} % otherwise use the standard text.

\begin{abstract}
Speculative decoding accelerates autoregressive generation by separating token proposal from verification, but most existing approaches are designed for single-node execution and do not scale well to multi-accelerator clusters used for serving modern Large Language Models (LLMs). We present StarSD, a one-for-many speculative decoding framework that uses a single draft model to serve multiple target models across distributed nodes via a star topology. StarSD decouples drafting and verification, enabling effective sharing of draft computation, and preventing distributed accelerators from remaining idle under bursty workloads. We provide a system-level analysis that characterizes when and why a single draft model can remain fully utilized by multiple verifiers, yielding predictable latency and utilization gains. Extensive experiments in real-world distributed inference settings demonstrate that StarSD simplifies deployment and supports flexible resource allocation across heterogeneous accelerators, while maintaining output quality. These results indicate that StarSD is a practical and scalable framework for bringing speculative decoding to modern cloud and edge inference infrastructures.
% The implementation of our proposed method is available at: []
\end{abstract}

%================================================================================

\section{Introduction}
Large Language Models (LLMs) have become a central interface for a wide range of applications, and their deployment increasingly involves serving large volumes of concurrent inference requests in shared accelerator clusters. Under such workloads, inference systems face sustained pressure on latency, throughput, and hardware efficiency, often operating near the limits of memory and compute. A key source of this pressure is \emph{autoregressive} decoding, where each output token requires a serial forward pass, directly constraining end-to-end latency and limiting achievable throughput.
This bottleneck has motivated many decoding acceleration methods that restructure generation to reduce the cost of sequential decoding. Among them, speculative decoding offers a \emph{lossless} acceleration approach by introducing a lightweight \emph{draft} model to propose multiple tokens that a large \emph{target} model verifies in parallel, preserving the target model’s output distribution while reducing decoding time \cite{pmlr-v202-leviathan23a,chen2023specsampling}.
Unlike batching or approximate decoding, this separation mechanism preserves exact generation semantics while exposing additional parallelism, which makes speculative decoding suitable for latency-sensitive serving. Building on this advantage, several methods are proposed to improve draft quality and reduce verification costs, thereby further increasing acceptance rates and achieving decoding speedups~\cite{li2024eagle,miao2024specinfer,li2024eagle2}.

However, most speculative decoding designs implicitly adopt a centralized execution mode, in which the draft and target models are co-located on a single accelerator, e.g., a Graphics Processing Unit (GPU).
This pattern is increasingly misaligned with production inference systems, as modern LLMs continue to grow in parameter scale and deployment workloads impose sustained memory pressure through large Key-Value (KV) caches, causing the combined footprint to approach or even exceed the capacity of a single accelerator. 
For instance, co-locating a Vicuna-7B~\cite{zheng2023judging} target model with the lightweight EAGLE draft model can leave little memory headroom~\cite{li2024eagle} for an RTX 4090 (24GB), restricting support for long-context serving.
Moreover, as LLMs expand into vertical domains and are increasingly deployed on resource-constrained edge devices, e.g., PCs, mobile phones, or NVIDIA Jetson, co-location becomes even less practical due to tighter memory and compute budgets.
Although substantial work has explored memory-efficient inference techniques~\cite{kwon2023efficient,sheng2023flexgen}, these methods do not remove the need to host the target-draft model pair on every accelerator, which constrains batching efficiency and scheduling flexibility.

Consequently, LLM inference systems require a spatial degree of freedom, enabling distributed placement that allows drafting and verification to be placed independently while cooperating tightly to share accelerator resources. In principle, this flexibility supports higher utilization by decoupling compute-intensive drafting from request-driven verification, and applies both within a single server across multiple accelerators and distributed clusters across machines. 
However, simply separating the draft and target models reveals an accelerator-level inefficiency due to a mismatch between the model inference pattern and the operating characteristics of modern accelerators. Specifically, the draft-then-verify workflow alternates between drafting, verification, and communication, while the target model verifies proposed tokens, the draft model remains idle.
This alternating execution introduces \emph{idle gaps} that limit end-to-end throughput.
In addition, disaggregation makes the draft execution \emph{bursty}, with short compute phases separated by millisecond-scale gaps.
Such bursty execution can prevent the draft model from staying in a warm compute state, increasing per-token latency and its variance compared to continuous execution.

Addressing these challenges requires a novel speculative decoding framework that can coordinate drafting and verification across machines while sustaining steady execution on modern accelerators. We propose StarSD, a one-for-many distributed speculative decoding framework where a single draft model concurrently assists multiple target instances.
StarSD multiplexes asynchronous returns through a global request buffer and executes requests in a work-conserving loop with per-instance state isolation, ensuring continuous utilization of the draft accelerator whenever work is available.
To analyze performance under concurrency, we develop an analytical system model that decouples \emph{progress}, measured as the expected number of accepted tokens per round, from \emph{time}, measured as the round duration. The model characterizes under-loaded and fully-loaded regimes and identifies the transition point at which idle gaps vanish.

StarSD yields two complementary benefits under multi-instance workloads.
By amortizing verification and communication return latency across multiple targets, it reduces idle gaps and increases aggregate throughput.
Furthermore, a continuous request stream keeps the draft side closer to a steady-state execution regime, improving draft-side per-token compute efficiency and stability.
These effects make speculative decoding more effective in disaggregated and memory-constrained deployments without changing model weights or the output distribution. Our contributions are summarized as

\begin{itemize}
  \item We propose StarSD in the one-for-many distributed speculative decoding setting, a framework that isolates per-target state and enables work-conserving draft execution in the presence of asynchronous verification returns from targets.
  \item We develop an analytical system model that separates per-iteration progress from per-iteration time, characterizes load regimes, and explains scaling behavior, saturation effects, and the tradeoff between per-target throughput and system-wide throughput.
  \item We evaluate StarSD across various deployments, including intra-machine multi-GPU, inter-machine multi-GPU, and heterogeneous platforms such as a Computer-Jetson setup. The results show that StarSD scales throughput with the number of target instances until saturation, with continuous drafting reducing per-token compute time and latency variance compared with bursty execution.
\end{itemize}

% ==========================================================================================================

\section{Preliminary}
\label{sec:preliminary}

Speculative decoding speeds up autoregressive inference by splitting decoding into two roles, i.e., a lightweight \emph{draft} model proposes candidate continuations, and a large \emph{target} model verifies them.
Following the original speculative decoding formulation~\cite{pmlr-v202-leviathan23a}, we denote the target and draft model types by $\mathcal{M}_p$ and $\mathcal{M}_q$, and use $M_p$ and $M_q$ to refer to their deployed instances, respectively.
When StarSD serves $N$ target instances concurrently, we index them as $\{M_p^{(i)}\}_{i=1}^{N}$. In the single-instance case, we drop the index and write $M_p \equiv M_p^{(1)}$ for notational simplicity.

\textbf{Per-request primitive.}
Given a verified prefix, one speculative decoding step contains of two stages:
(i) \textbf{proposal}, where the draft instance $M_q$ generates candidate continuations from the prefix, and
(ii) \textbf{verification}, where the target instance $M_p$ verifies the proposals and returns an \emph{accepted prefix} defined as the longest prefix that passes verification.
Both models then advance to the accepted prefix and continue decoding.

\textbf{Tree-based proposals.}
SpecInfer~\cite{miao2024specinfer} popularizes \emph{tree-based} speculative proposals to better exploit accelerator parallelism.
Rather than proposing a single token chain, $M_q$ expands a token tree of depth $d$ with branching factor $k$, i.e., top-$k$ candidates per step, so that the target model can score many candidate continuations in parallel within a batched forward pass.
Following this design, advanced systems such as EAGLE~\cite{li2024eagle} adopt tree-style expansion to increase the likelihood of finding a long verifiable prefix while maintaining high verification throughput.
In this work, we also adopt the tree-based drafting and use the depth $d$ to parameterize the draft workload.

\textbf{Acceptance intuition.}
The number of accepted tokens mainly depends on how well $\mathcal{M}_q$ matches $\mathcal{M}_p$ under the current prefix. A smaller draft-target mismatch typically yields longer accepted prefixes.
Beyond architectural designs, several works explicitly align $\mathcal{M}_q$ to $\mathcal{M}_p$ via training, e.g., DistillSpec~\cite{zhou2023distillspec} and AdaSPEC~\cite{hu2025adaspec} use distillation tailored to speculative decoding, while Online Speculative Decoding ~\cite{liu2023online} reduces mismatch after deployment by periodically updating $\mathcal{M}_q$ using the corrections from $\mathcal{M}_p$.
EAGLE~\cite{li2024eagle,2025arXiv250301840L} complements these training-based approaches by conditioning $\mathcal{M}_q$ on additional signals from $\mathcal{M}_p$, i.e., last-layer hidden states, making proposals more target-aware.

\section{System Model}
We present the system model for StarSD, where one $M_q$ serves $N$ target instances $\{M_p^{(i)}\}_{i=1}^{N}$. The model characterizes system throughput as a function of $N$ and the per-request workload by capturing the expected accepted tokens and the iteration time under different load conditions.

% \begin{table}[h]
% \centering
% \caption{Mathematical Notations}
% \label{tab:notations}
% \begin{tabular}{>{\itshape}l<{:}p{0.8\linewidth}}
% \toprule
% $M_p$ & Target model (LLM) \\
% $M_q$ & Draft model (SLM)\\
% $d$ & Depth of draft token tree \\
% $k$ & Number of each tree node expanded \\
% $\beta$ & Acceptance rate \\
% $\alpha$ & Average acceptance rate \\
% $l$ & Accept length (expected token generated in one iteration \\
% $N$ & Number of active $M_p$ \\
% $t_v$ & Time for $M_p$ to verify draft token in one iteration \\
% $t_s$ & Time for $M_q$ to generate a layer of draft tokens \\
% $t_c$ & Communication time in one iteration \\
% $\gamma$ & One system iteration in which $M_q$ inference for the same $M_p$ again \\
% $T_\gamma$ & Time for one system iteration \\
% $O_\gamma$ & System throughput (number of tokens) in one $\gamma$ \\
% $R_p$ & Iteration speed of $M_p$ \\
% $R_q$ & Iteration speed of $M_q$ \\
% \bottomrule
% \end{tabular}
% \end{table}

\subsection{Iteration Definitions}
System iterations are indexed by $\gamma = \{1, \ldots, \gamma_{\mathrm{end}}\}$, where each iteration corresponds to one draft-side scheduling round, and $\gamma_{\mathrm{end}}$ denotes the total number of scheduling rounds executed during a decoding session. In the $\gamma$-th iteration, the draft instance $M_q$ processes one speculative request for each active target instance $M_p^{(i)}$ in a work-conserving order.
For a given request, $M_q$ generates a speculative draft with depth $d$, which is verified by the corresponding $M_p^{(i)}$, and the verification result is returned to $M_q$ before the next iteration.
The draft depth $d$ upper-bounds the per-iteration progress and determines the amount of work for each $M_p^{(i)}$. This abstraction allows us to analyze system progress and time cost on a per-iteration basis.

\subsection{Per-iteration Acceptance Length}
We first quantify the per-iteration progress made by all target instances in one iteration, measured by the number of tokens accepted by $\{M_p^{(i)}\}_{i=1}^{N}$.

\paragraph{Acceptance probability.}
We start from a token-level acceptance notion that captures the agreement between $\mathcal{M}_q$ and $\mathcal{M}_p$.
We consider that all target instances $\{M_p^{(i)}\}_{i=1}^{N}$ are instantiated from the same target model type $\mathcal{M}_p$, which is common in multi-session serving scenarios, e.g., LLM-based multi-agent systems \cite{NEURIPS2023_a3621ee9,chen2024agentverse,wu2024autogen}. 
The index $i$ distinguishes concurrent decoding sessions that may have different prefixes and thus different next-token distributions.
Conditioned on a prefix $y$, let the next-token distributions of target $M_p^{(i)}$ and draft $M_q$ be $p_i(\cdot\mid y)$ and $q(\cdot\mid y)$, respectively.
We define the token-level acceptance probability at prefix $y$ as
\begin{equation}
\begin{aligned}
\beta_i(y)
&\triangleq \sum_x \min\{p_i(x\mid y),\, q(x\mid y)\} \\
&= 1 - D_{\mathrm{LK}}\!\left(p_i(\cdot\mid y),\, q(\cdot\mid y)\right),
\end{aligned}
\label{eq:beta_cond}
\end{equation}
where $D_{\mathrm{LK}}(p,q)=\frac{1}{2}\sum_x |p(x)-q(x)|$ denotes the total variation distance between $p$ and $q$ \cite{pmlr-v202-leviathan23a}.
This quantity measures how likely a draft proposal from $M_q$ is accepted by $M_p^{(i)}$ under the given prefix $y$.

% For compactness in system-level analysis, we use $\beta_i$ to denote a prefix-averaged acceptance rate for target $i$, and define the average acceptance rate across active targets as
% \begin{equation}
% \alpha \triangleq \frac{1}{N}\sum_{i=1}^{N}\beta_i.
% \label{eq:alpha_def}
% \end{equation}
% Here $\alpha$ summarizes the overall agreement between $M_q$ and the set of concurrently served targets.

\paragraph{Expected accept length.}
We lift token-level acceptance to an iteration-level quantity by defining the accept length.
Let $l_i(d)\in\{0,1,\dots,d\}$ denote the accept length of $M_p^{(i)}$ in one iteration, i.e., the number of consecutive proposed tokens accepted by $M_p^{(i)}$.
This random variable represents the realized per-iteration progress for $M_p^{(i)}$.
To obtain the expected accept length, we use a standard tail-sum identity that converts $\mathbb{E}[l_i(d)]$ into a sum of acceptance events.

\begin{lemma}[Tail-sum identity for bounded accept length]
For any $l_i(d)\in\{0,1,\dots,d\}$, we have
\begin{equation}
\mathbb{E}\!\left[l_i(d)\right] = \sum_{j=1}^{d} \Pr\!\left(l_i(d) \ge j\right).
\label{eq:tail_sum_li}
\end{equation}
\end{lemma}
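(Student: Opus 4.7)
The plan is to prove this as a standard tail-sum identity for a bounded, nonnegative, integer-valued random variable by exchanging the order of summation; no analytic subtleties arise because $l_i(d)$ takes values in the finite set $\{0, 1, \ldots, d\}$. The central observation is the deterministic pointwise decomposition
\[
l_i(d) \;=\; \sum_{j=1}^{d} \mathbf{1}\!\left\{ l_i(d) \ge j \right\},
\]
which holds for every realization: if $l_i(d) = \ell$, then exactly $\ell$ of the indicators (those with $j \le \ell$) evaluate to one and the rest evaluate to zero.

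From this identity the lemma follows in one line by linearity of expectation, since $\mathbb{E}\!\left[\mathbf{1}\{l_i(d) \ge j\}\right] = \Pr(l_i(d) \ge j)$ and the outer sum is finite. An equivalent, more explicit route would start from the definition $\mathbb{E}[l_i(d)] = \sum_{\ell=0}^{d} \ell \Pr(l_i(d) = \ell)$, rewrite the prefactor as $\ell = \sum_{j=1}^{\ell} 1$, and swap summation over the triangular index set $\{(\ell, j) : 1 \le j \le \ell \le d\}$, which is permissible because both ranges are finite. Collecting the inner sum then yields $\sum_{j=1}^{d} \sum_{\ell=j}^{d} \Pr(l_i(d) = \ell) = \sum_{j=1}^{d} \Pr(l_i(d) \ge j)$.

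The only point needing care is index bookkeeping, in particular that the tail sum starts at $j = 1$ rather than $j = 0$: the $j = 0$ term would contribute $\Pr(l_i(d) \ge 0) = 1$, which has no counterpart in the expectation since the $\ell = 0$ atom contributes zero to $\mathbb{E}[l_i(d)]$. Because $l_i(d)$ is almost surely bounded by $d$, no monotone- or dominated-convergence machinery is required to justify the interchange, and there is no substantive obstacle to overcome; the lemma is a textbook fact recorded here only because it will be the workhorse for converting per-iteration token-level acceptance probabilities into the expected accept length used in subsequent throughput analysis.
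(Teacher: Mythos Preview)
Your proposal is correct and mirrors the paper's own proof almost verbatim: both write $l_i(d)=\sum_{j=1}^{d}\mathbf{1}\{l_i(d)\ge j\}$ and then apply linearity of expectation. Your additional remarks on the alternative double-sum route and the $j=1$ versus $j=0$ indexing are sound but go beyond what the paper records.
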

The term $\Pr \left(l_i\left(d\right)\ge j\right)$ is the probability that the first $j$ proposed tokens are all accepted, and their sum yields the expected accept length.

\begin{proof}
Since $l_i(d)$ is non-negative and bounded by $d$, it can be written as
$
l_i(d) = \sum_{j=1}^{d}\mathbf{1}\{l_i(d)\ge j\},
$
where $\mathbf{1}\{\cdot\}$ is the indicator function.
Taking the expectation on both sides and using the linearity of expectation gives
$
\mathbb{E}[l_i(d)] = \sum_{j=1}^{d}\mathbb{E}[\mathbf{1}\{l_i(d)\ge j\}]
= \sum_{j=1}^{d}\Pr(l_i(d)\ge j).
$
\end{proof}

\paragraph{Closed-form Expression.}
A common approximation in speculative decoding analysis is that, within one iteration, token acceptances are i.i.d. Bernoulli with an acceptance rate $\beta_i$ \cite{pmlr-v202-leviathan23a}. However, later drafts depend on earlier acceptances in practice, so the independence assumption is an analytical simplification \cite{chen2023specsampling}.
Under this approximation
\begin{equation}
\mathbb{E}\!\left[l_i(d)\right] \approx \sum_{j=1}^{d}\beta_i^{\,j}
= \frac{\beta_i\left(1-\beta_i^{\,d}\right)}{1-\beta_i}.
\label{eq:El_closed}
\end{equation}
This expression shows diminishing returns in $d$ when $\beta_i<1$, as the marginal gain scales with $\beta_i^{d+1}$.

Finally, the expected total accepted tokens across $\{M_p^{(i)}\}_{i=1}^{N}$ in one iteration is
\begin{equation}
\mathbb{E}[l]_\gamma \triangleq \sum_{i=1}^{N}\mathbb{E}\!\left[l_i(d)\right]
\approx \sum_{i=1}^{N}\frac{\beta_i\left(1-\beta_i^{\,d}\right)}{1-\beta_i}.
\label{eq:El_gamma}
\end{equation}
This quantity aggregates per-iteration progress and serves as the system progress term when interpreting throughput.

\subsection{Per-iteration Time Decomposition}
We turn to the \emph{time} component of throughput by characterizing the duration of a single iteration under concurrent service.
In each iteration, $M_q$ serves $N$ depth-$d$ requests, one from each active target instance $M_p^{(i)}$.
The service time for a single depth-$d$ request is
\begin{equation}
S(d) \triangleq d\,t_s ,
\label{eq:Sd}
\end{equation}
where $t_s$ is the time for $M_q$ to generate one layer of draft tokens, which remains nearly constant due to parallel execution within a modern accelerator.

After $M_q$ finishes serving a given $M_p^{(i)}$, the earliest time that $M_q$ can receive the next request from the same $M_p^{(i)}$ is determined by the return time as
\begin{equation}
Z(d) \triangleq t_c + t_v ,
\label{eq:Zd}
\end{equation}
where $t_c$ is the communication time in one iteration and $t_v$ is the time taken for $M_p^{(i)}$ to verify the received draft tokens. 
% Without loss of generality, we consider $t_v$ to be identical across $M_p^{(i)}$ since $\{M_p^{(i)}\}_{i=1}^{N}$ share the same $\mathcal{M}_p$ and are deployed on machines with similar performance within a cluster.

We consider $M_q$ is work-conserving, whenever the request buffer is non-empty, $M_q$ continuously serves pending requests according to a policy such as first-in-first-out.
Each target instance $M_p^{(i)}$ issues a new request only after completing verification of its previous one.
Under these settings, system load is fully determined by the number of concurrent target instances, without additional effects from scheduling or request reordering.

We next characterize the condition under which the draft instance $M_q$ remains continuously busy in StarSD pipeline.
\begin{lemma}[A sufficient non-idling condition for $M_q$]
\label{lem:busy}
If
\begin{equation}
Z(d) \le (N-1)\,S(d),
\label{eq:busy}
\end{equation}
then a work-conserving $M_q$ does not idle due to waiting for verification results from any target instance $M_p^{(i)}$.
\end{lemma}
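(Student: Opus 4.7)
The plan is to prove the lemma by contradiction. I would suppose there exists some time $\tau^{*}$ at which the draft instance $M_q$ is idle specifically because it is waiting for a verification return from some target, and then derive a lower bound on $Z(d)$ that contradicts the hypothesis $Z(d)\le (N-1)\,S(d)$.

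First, I would fix the bookkeeping: for each target instance $M_p^{(i)}$, let $\tau_i$ denote the completion time of the most recent service $M_q$ performed for $M_p^{(i)}$ before $\tau^{*}$. Work-conservation forces $M_q$ to serve any pending request in the buffer, so an idle period at $\tau^{*}$ means that no target has yet supplied its next request. By \eqref{eq:Zd}, the next request from $M_p^{(i)}$ arrives at $\tau_i + Z(d)$, and the idling assumption gives $\tau_i + Z(d) > \tau^{*}$, equivalently
\begin{equation*}
\tau_i > \tau^{*} - Z(d), \qquad i=1,\dots,N.
\end{equation*}
Therefore every completion time $\tau_i$ lies in the half-open window $(\tau^{*}-Z(d),\,\tau^{*}]$, which has length $Z(d)$.

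Next, I would invoke the non-overlapping structure of $M_q$'s service intervals. Since $M_q$ processes one request at a time and each service lasts $S(d)=d\,t_s$ by \eqref{eq:Sd}, the $N$ completion times $\tau_1,\dots,\tau_N$ are distinct; sorting them as $\tau_{(1)}\le\dots\le\tau_{(N)}$, consecutive completions satisfy $\tau_{(k+1)}\ge \tau_{(k)}+S(d)$, hence $\tau_{(N)}-\tau_{(1)}\ge (N-1)\,S(d)$. Combining this with the containment in a window of length $Z(d)$ yields $(N-1)\,S(d)\le \tau_{(N)}-\tau_{(1)} < Z(d)$, contradicting the hypothesis. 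The contradiction establishes the claim.

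The main subtlety I expect is the startup boundary: the argument implicitly assumes each $M_p^{(i)}$ has already received at least one service before $\tau^{*}$, so that $\tau_i$ is well-defined. I would dispatch this as follows. If some $M_p^{(i)}$ has never been served, then no verification from $M_p^{(i)}$ is outstanding, so $M_q$ cannot be idling due to waiting on that instance; if instead its first request is already queued, work-conservation would have $M_q$ serving it rather than idling. Thus within the regime the lemma targets, namely idling attributable to verification return delays, the contradiction argument applies and \eqref{eq:busy} is sufficient to preclude it.
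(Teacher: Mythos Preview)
Your proof is correct. The paper's own proof is a two-sentence forward argument: after $M_q$ finishes serving $M_p^{(i)}$, that target's next request returns only after $Z(d)$, and in that window $M_q$ can serve one request from each of the remaining $N-1$ targets for a total of $(N-1)\,S(d)$; the inequality then guarantees $i$'s request is back before $M_q$ runs out of work. You reach the same conclusion from the opposite direction, assuming an idle instant $\tau^{*}$ and showing that all $N$ most-recent completion times would have to be packed into an interval of length $Z(d)$ while being pairwise at least $S(d)$ apart, forcing $(N-1)\,S(d)<Z(d)$. The underlying inequality is identical, but your contradiction framing is more self-contained: the paper's forward argument tacitly assumes the other $N-1$ requests are already pending (effectively an inductive hypothesis it never states), whereas your backward-looking window argument avoids that circularity and also dispatches the startup boundary explicitly.
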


\begin{proof}
After $M_q$ completes serving a request from some target instance $M_p^{(i)}$, the earliest time at which $M_p^{(i)}$ can issue its next request is $Z(d)$. During this interval, $M_q$ can serve at most one pending request from each of the remaining $N-1$ target instances, incurring a total service time of $(N-1)S(d)$. 
% If $Z(d) \le (N-1)S(d)$, then by the time $M_q$ finishes serving these requests, the next request from $M_p^{(i)}$ is already available. Therefore, whenever $M_q$ becomes idle, at least one request is present in the buffer, and a work-conserving $M_q$ does not idle.
\end{proof}

\subsection{Per-iteration Throughput}
As shown in Fig.~\ref{fig:system pipeline fully loaded}, we consider two cases depending on whether \eqref{eq:busy} holds.

\textbf{Fully-loaded case.}
\begin{figure}[t]
    \centering
    \includegraphics[width=0.45\textwidth]{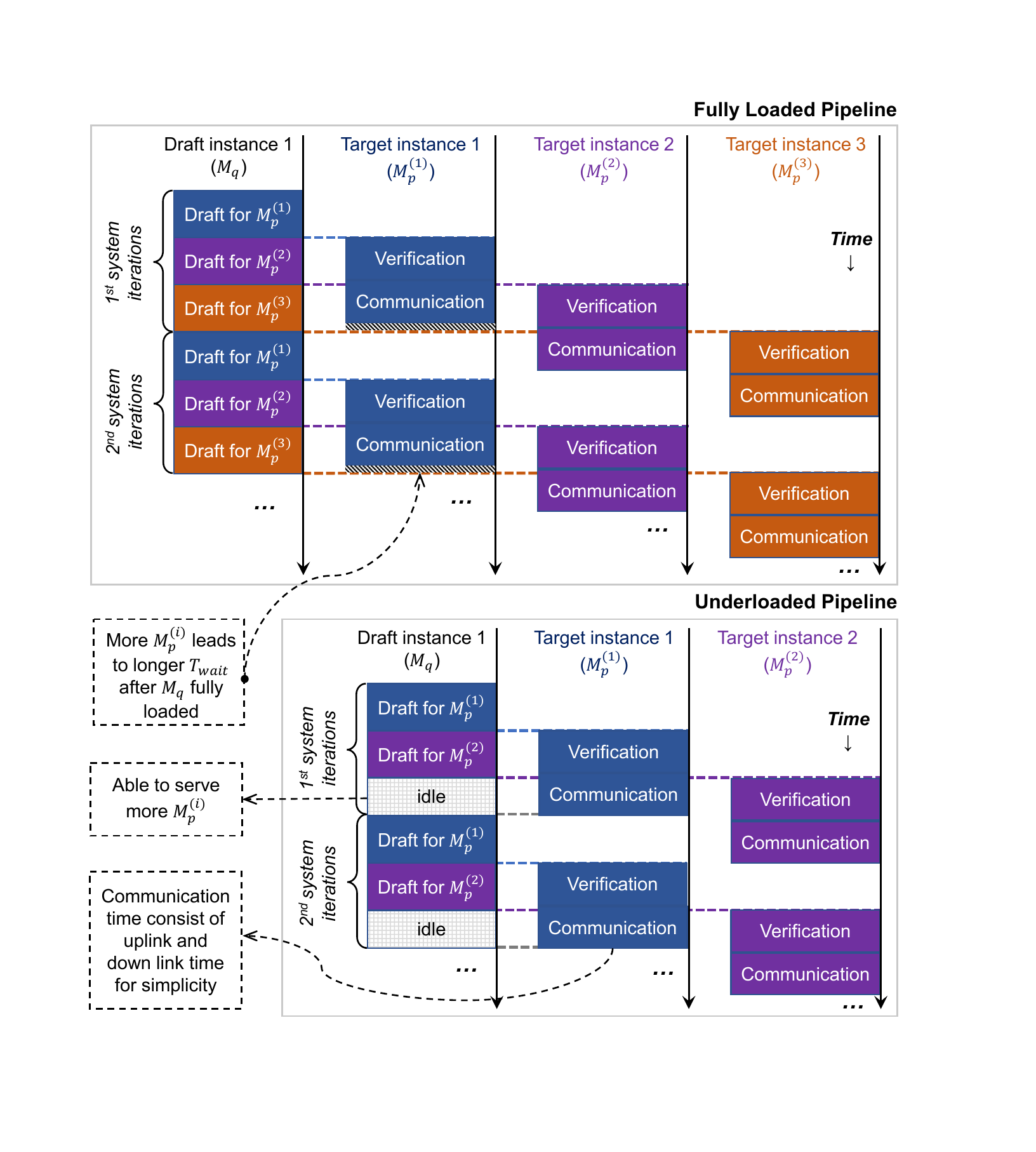}
    \caption{Execution cases of StarSD under different load conditions. 
    \textbf{Fully-loaded case} where verification and communication delays of each $M_p^{(i)}$ are overlapped, keeping $M_q$ work-conserving with no idle gaps. 
    \textbf{Under-loaded case} where the draft server idles while waiting for a target to return the next request.}
    \label{fig:system pipeline fully loaded}
\end{figure}
When \eqref{eq:busy} holds, the return time $Z(d)$ of any active $M_p^{(i)}$ can be covered by the $M_q$-side service spent on the other $N-1$ active $M_p^{(i)}$.
As a result, once the request buffer is populated, $M_q$ can continuously serve requests without observable idle gaps.
In this case, the per-iteration time is
\begin{equation}
T_{\gamma_{\rm full}} = NS(d),
\label{eq:Tgamma_full}
\end{equation}
since one iteration corresponds to serving $N$ depth-$d$ requests (one per active $M_p^{(i)}$), each taking $S(d)$ time on $M_q$.

\textbf{Under-loaded case.}

When \eqref{eq:busy} does not hold, serving the other $N-1$ active $M_p^{(i)}$ for a total of $(N-1)S(d)$ time may still be insufficient for the original $M_p^{(i)}$ to complete its return within $Z(d)$.
In this case, $M_q$ can become idle due to a temporary lack of pending requests.
We quantify this additional waiting time by defining the idle gap per iteration
\begin{equation}
T_{\mathrm{idle}} \triangleq \max\!\left\{0,\; Z(d) - (N-1)S(d)\right\}.
\label{eq:Tidle}
\end{equation}
Thus, the general per-iteration time can be expressed as
\begin{equation}
T_{\gamma} = NS(d) + T_{\mathrm{idle}}.
\label{eq:Tgamma_under}
\end{equation}
% where the first term accounts for $M_q$-side service of $N$ requests and the second term accounts for the idle interval caused by the uncovered portion of the $M_p^{(i)}$ return time.
Finally, we define the throughput in the $\gamma$-th iteration as
\begin{equation}
O_\gamma \triangleq \frac{\mathbb{E}[l]_\gamma}{T_\gamma}.
\label{eq:Ogamma}
\end{equation}
This definition decouples the expected progress $\mathbb{E}[l]_\gamma$ from the time cost $T_\gamma$, making clear whether throughput is limited by acceptance or by the iteration time overheads.
It also enables case-wise interpretation through \eqref{eq:Tgamma_full} and \eqref{eq:Tgamma_under}.

%====================================================================

\section{StarSD Implication}
This section translates the analytical conditions into StarSD implication principles and examines how throughput scales as the number of target instances increases.

\subsection{StarSD Design}
\begin{algorithm}[!t]
\caption{StarSD One-for-Many Serving Runtime}
\label{alg:starsd_min}
\hspace*{0.5em}\textbf{Input:} $M_q$, $\{M_p^{(i)}\}_{i=1}^{N}$, Prefix $\hat y$.\\
\hspace*{0.5em}\textbf{Preparation:}\\
\hspace*{0.5em} $\boldsymbol{\cdot}$ Handshake: assign each target a unique\\
\hspace*{0.5em} $\boldsymbol{\cdot}$ \texttt{tag} and establish a dedicated port.\\
\hspace*{0.5em} $\boldsymbol{\cdot}$ $\mathcal{C}[\texttt{tag}]$: per-session KV cache for \texttt{tag}.\\
\hspace*{0.5em} $\boldsymbol{\cdot}$ $\mathcal{Q}_{in}$: global request buffer $(\texttt{tag}, y)$.\\
\hspace*{0.5em} $\boldsymbol{\cdot}$ $\mathcal{Q}_{out}$: outgoing queue of replies $(\texttt{tag}, \hat{y})$.\\
\begin{algorithmic}[1]
\STATE \textit{\#\# In machine hosting $M_p^{(i)}$}
\IF {\texttt{recv} $\hat{y}$ from $M_q$}
\STATE $y \gets M_p^{(i)}(\hat y)$ \hfill \textit{// refer to Algorithm~\ref{alg:tree_search} for verification details}
\STATE \texttt{send} $y$ to $M_q$ on the port identified by \texttt{tag}
\ENDIF
\STATE
\STATE \textit{\#\# In machine hosting $M_q$}
\STATE \textit{// Receiver}
\WHILE{true}
    \STATE $(\texttt{tag}, y)\gets$ \texttt{recv} from any $M_p^{(i)}$
    \STATE \texttt{push} $(\texttt{tag}, y)$ to $\mathcal{Q}_{in}$
\ENDWHILE
\STATE \textit{// Draft inference}
\IF{$\mathcal{Q}_{in}$ is non-empty}
    \STATE $(\texttt{tag}, y)\gets$ \texttt{pop} $\mathcal{Q}_{in}$
    \STATE $(\hat{y}, \mathcal{C}[\texttt{tag}]) \gets M_q(y;\ \mathcal{C}[\texttt{tag}])$\hfill \textit{// refer to Algorithm~\ref{alg:tree_search} for inference details}
    \STATE \texttt{push} $(\texttt{tag}, \hat{y})$ to $\mathcal{Q}_{out}$
\ENDIF
\STATE \textit{// Sender}
\IF{\textbf{$\mathcal{Q}_{out}$ not empty}}
    \STATE $(\texttt{tag}, \hat{y})\gets$ \texttt{pop} $\mathcal{Q}_{out}$
    \STATE \texttt{send} $\hat{y}$ back to $M_p^{(i)}$ on the port identified by \texttt{tag}
\ENDIF
\end{algorithmic}
\end{algorithm}

We design StarSD by considering the following runtime properties.

\paragraph{Work-conserving draft execution.}
To match \eqref{eq:Tgamma_full}-\eqref{eq:Tgamma_under} under asynchronous returns, $M_q$ maintains a global request buffer $\mathcal{Q}_{in}$, a outgoing buffer $\mathcal{Q}_{out}$, and runs a non-blocking loop that always serves any pending request, making uncovered return time appear as $T_{\mathrm{idle}}$ in \eqref{eq:Tidle}. 

\paragraph{Per-target state isolation.}
To preserve the correctness of \eqref{eq:beta_cond} and \eqref{eq:tail_sum_li} under concurrency, StarSD isolates all per-target state, including KV cache and metadata, across target instances by \texttt{tag} via $\mathcal{C}[\texttt{tag}]$.

\paragraph{Independent return paths.}
To preserve the return-time decomposition $Z(d)=t_c+t_v$ in \eqref{eq:Zd}, we assign each target a unique \texttt{tag} and a dedicated port during a one-time handshake, and reuse these ports for steady-state payload transfer in Algorithm~\ref{alg:starsd_min}.

Algorithm~\ref{alg:starsd_min} provides one concrete realization of these properties, and Fig.~\ref{fig:system model} summarizes the resulting execution pipeline. Verified prefixes from multiple target instances are collected into a shared request buffer and served by the draft instance in a work-conserving manner. Drafted tokens are routed back to their corresponding targets for parallel verification, and the verified prefixes re-enter the pipeline for subsequent iterations. By decoupling request arrival, draft execution, and verification returns, this pipeline avoids cross-target head-of-line blocking and enables continuous draft-side execution under concurrent service.
\begin{figure}[t]
    \centering
    \includegraphics[width=0.48\textwidth]{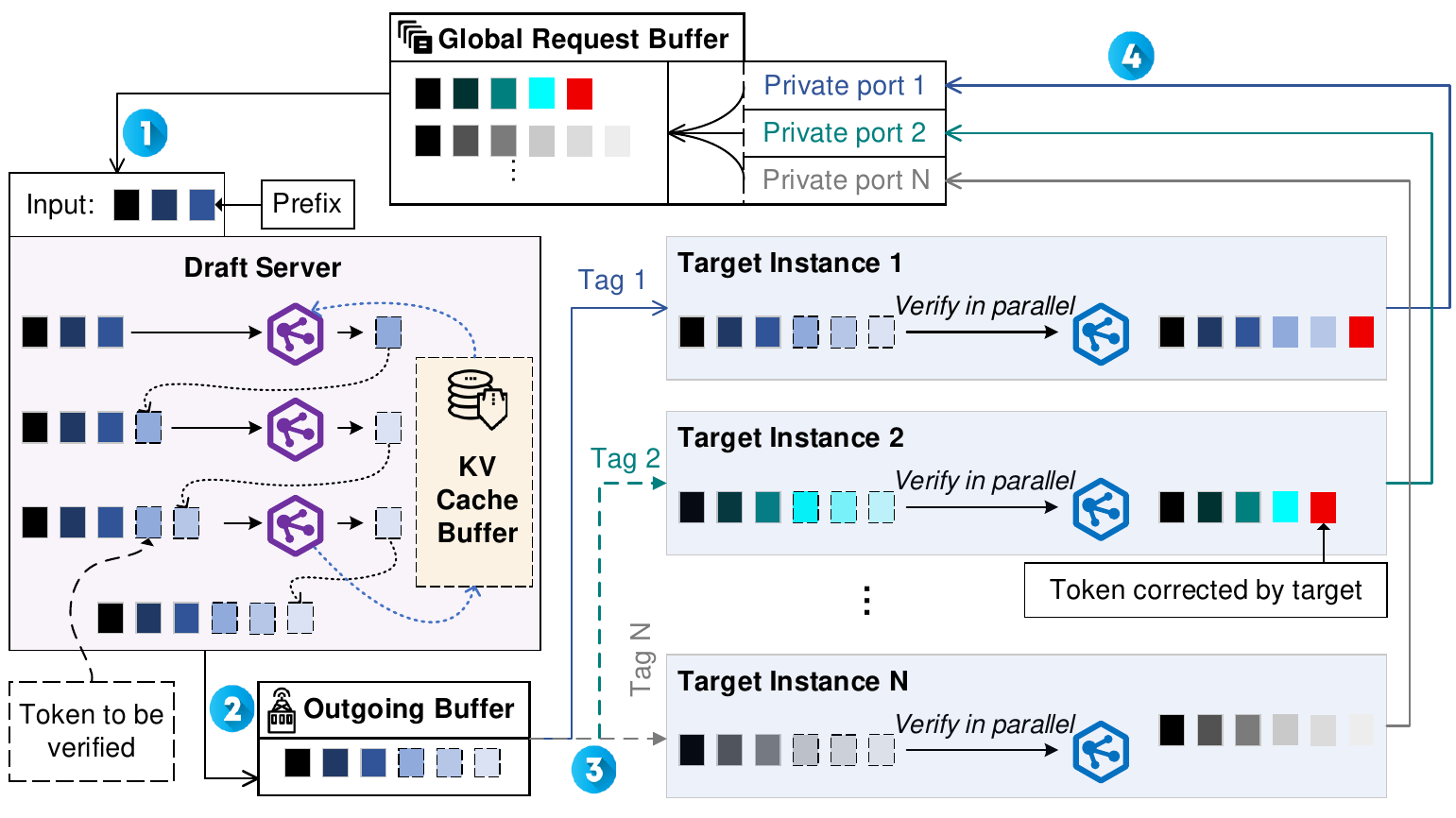}
    \caption{StarSD Pipeline. Use the global request buffer and outgoing buffer to guarantee the work continuity of the draft model.}
    \label{fig:system model}
\end{figure}

\subsection{Insights}
\label{sec:insights}
\begin{figure}[t]
    \centering
    \includegraphics[width=0.9\linewidth]{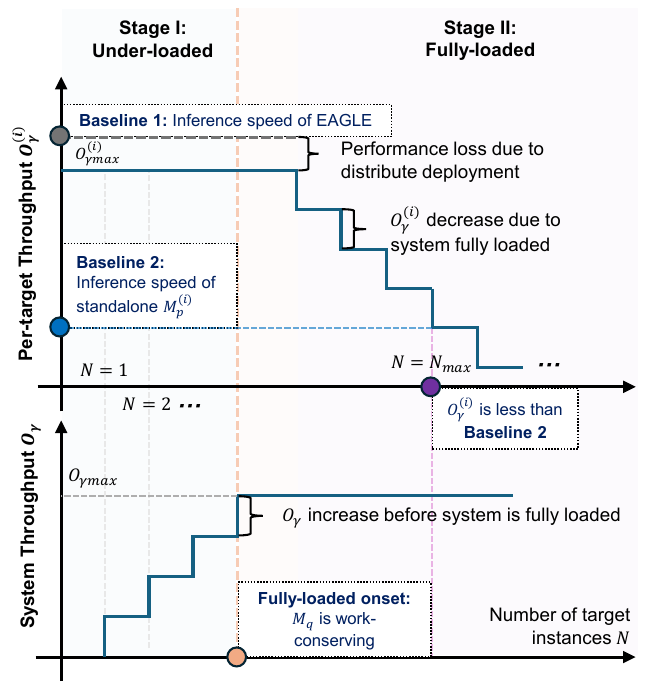}
    \caption{StarSD framework behavior when scaling the number of target instances. Baseline 1 stands for the throughput of SOTA speculative decoding that is deployed on a single accelerator. Baseline 2 is the standalone inference speed of an LLM.}
    \label{fig:throughput analysis}
\end{figure}
While StarSD enables one-for-many serving, its throughput exhibits a trade-off. Figure~\ref{fig:throughput analysis} shows a two-stage behavior as the number of active target instances grows.

\paragraph{Stage I: Under-loaded.}
\label{under load para}
When $N$ is small, $M_q$ is under-loaded, the global request buffer is frequently empty, so $M_q$ exhibits idle gaps, and adding more active \(M_p^{(i)}\) primarily improves utilization.
In this regime, the per-\(M_p^{(i)}\) throughput \(O_\gamma^{(i)}\) stays approximately constant, while the system throughput \(O_\gamma\) increases with \(N\) because the uncovered return time is gradually amortized across more concurrent requests.
This corresponds to the violation of \eqref{eq:busy}, where the uncovered return time appears as \(T_{\mathrm{idle}}\) in \eqref{eq:Tidle}, and thus \(T_\gamma\) follows \eqref{eq:Tgamma_under}.

\paragraph{Stage II: Fully-loaded.}
\label{full load para}
As \(N\) increases, the idle gap vanishes, the global request buffer becomes persistently non-empty, and \(M_q\) turns work-conserving.
Beyond this point, increasing \(N\) no longer increases the service capacity of \(M_q\) but introduces queueing in the global request buffer.
Consequently, the waiting time before a request is served increases \(T_\gamma\), which reduces the per-\(M_p^{(i)}\) throughput \(O_\gamma^{(i)}\), while \(O_\gamma\) saturates near its maximum.
Formally, the full-load onset is the smallest \(N\) such that \eqref{eq:busy} holds.
Equivalently, for
\begin{equation}
N \ge N_{\mathrm{full}} \triangleq \left\lceil \frac{Z(d)}{S(d)} \right\rceil + 1,
\end{equation}
we have \(T_{\mathrm{idle}} = 0\) and \(T_\gamma\) follows \eqref{eq:Tgamma_full}.
%=============

Importantly, admitting more $M_p^{(i)}$ can still be beneficial after full load as long as the resulting $O_\gamma^{(i)}$ remains above the throughput of a standalone $M_p^{(i)}$.
We denote by $N_{\max}$ the largest active set size such that $O_\gamma^{(i)}$ does not fall below this point. Beyond $N_{\max}$, additional concurrency degrades per-$M_p^{(i)}$ throughput below what $M_p^{(i)}$ can achieve without $M_q$.
This observation motivates a practical admission rule at $M_q$, reject new $M_p^{(i)}$ requests when $N>N_{\max}$ to preserve per-$M_p^{(i)}$ throughput higher than the standalone inference speed of $M_p^{(i)}$.

\paragraph{Packing advantage in practice.} Beyond throughput, StarSD improves practical memory packing on accelerators. Colocating one target instance with its draft model forms fixed pairs that occupy large, indivisible memory blocks. 
Under a fixed accelerator memory budget, such coarse-grained pairing limits how many target instances can be placed concurrently and leaves fragmented memory regions. 
By decoupling drafting from verification and providing the draft model as a shared service, StarSD's finer-grained placement allows accelerator memory to be primarily filled with target instances, increasing target density and enabling either more concurrent replicas or longer context lengths under the same hardware budget.

% ==========================
\section{Experiments}
We evaluate StarSD along three dimensions, i.e., its distributed execution characteristics, its scalability under one-for-many concurrency, and its practical draft-side performance behavior under different runtime regimes.

\subsection{Distributed Behavior of StarSD}

\begin{figure}[t]
    \centering
    \begin{minipage}[t]{0.6\columnwidth}
        \centering
        \includegraphics[width=\linewidth]{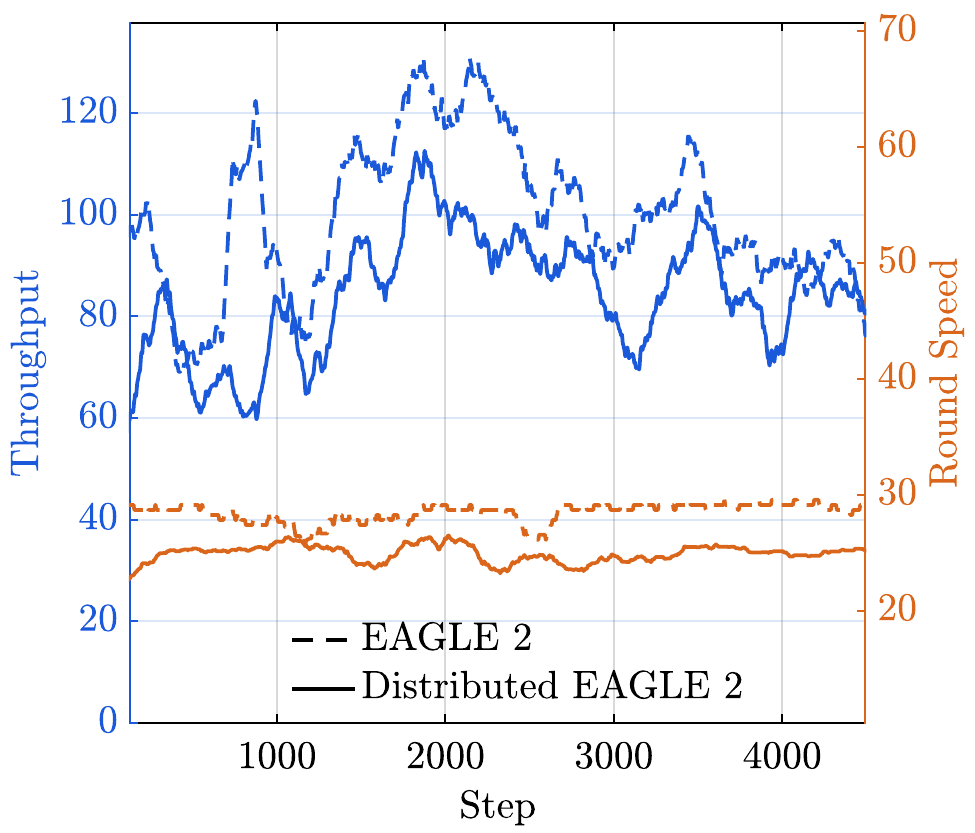}
        % \vspace{-0.5em}
        {\small (a) Performance Comparison.}
    \end{minipage}\hfill%
    \begin{minipage}[t]{0.39\columnwidth}
        \centering
        \includegraphics[width=\linewidth]{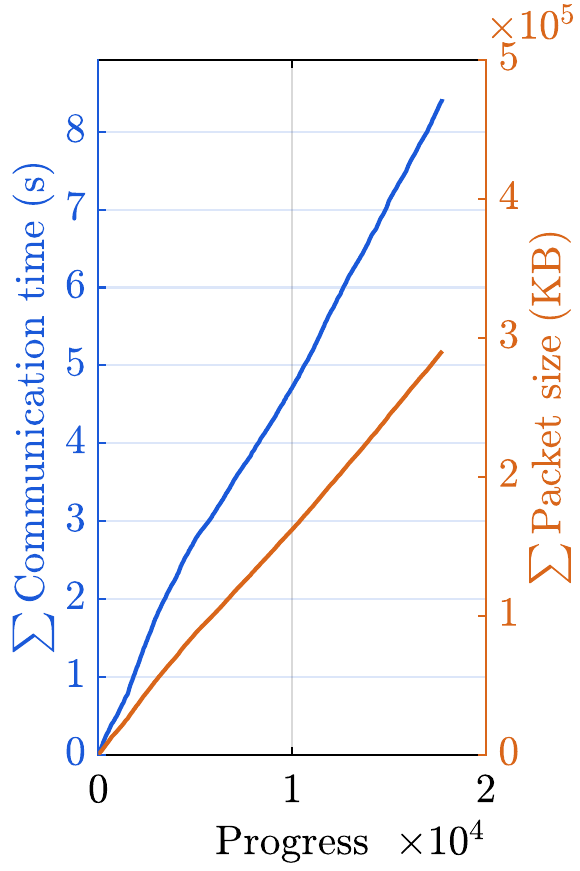}
        % \vspace{-0.5em}
        {\small (b) Communication Cost.}
    \end{minipage}
    \caption{Throughput loss due to distribute deployment of EAGLE and communication overhead.}
    \label{fig:distribute eagle communicate}
\end{figure}

Since StarSD does not modify model weights or the speculative decoding algorithm, performance differences arise solely from system effects, namely communication overhead introduced by distribution and potential gains from parallel execution. Thus, we evaluate StarSD on the full MT-Bench benchmark~\cite{zheng2023judging} using EAGLE~2~\cite{li2024eagle2} as a representative backend. 
To isolate the cost of distribution, we consider the $N=1$ case, where the draft instance $M_q$ and the target instance $M_p$ are placed on different accelerators. In this configuration, StarSD reduces to a distributed deployment of EAGLE~2, enabling a direct measurement of cross-device communication overhead.

As shown in Fig.~\ref{fig:distribute eagle communicate}(\textit{a}), we report throughput from two perspectives,
(1) system throughput $O_\gamma$, which fluctuates noticeably across MT-Bench due to prompt-dependent acceptance lengths, and
(2) the round speed $R_q$ of $M_q$ which is proportional to $O_\gamma$ and prove in~\ref{sec:rq_rp_metric}, it provides a smoother perspective to observe performance.
To reflect the cost of distributed deployment, we measure end-to-end serving performance in the deployed setting, so the experiment results naturally include cross-device overheads such as communication latency shown in Fig.~\ref{fig:distribute eagle communicate}(\textit{b}), message serialization and deserialization, and GPU-CPU memory transfers. Our implementation is deployment-agnostic and supports multi-accelerator execution either within a single machine or across machines. Accordingly, we use socket-based communication rather than hardware-specific backends, e.g., NVLink, trading some performance for portability. This observation aligns with recent findings in distributed LLM serving that cross-accelerator data movement can incur non-trivial overhead and lead to a measurable performance drop under equivalent accelerator resources~\cite{2026arXiv260108833L}.

\subsection{Scaling Behavior of StarSD}
\begin{figure}[t]
    \centering
    \includegraphics[width=\linewidth]{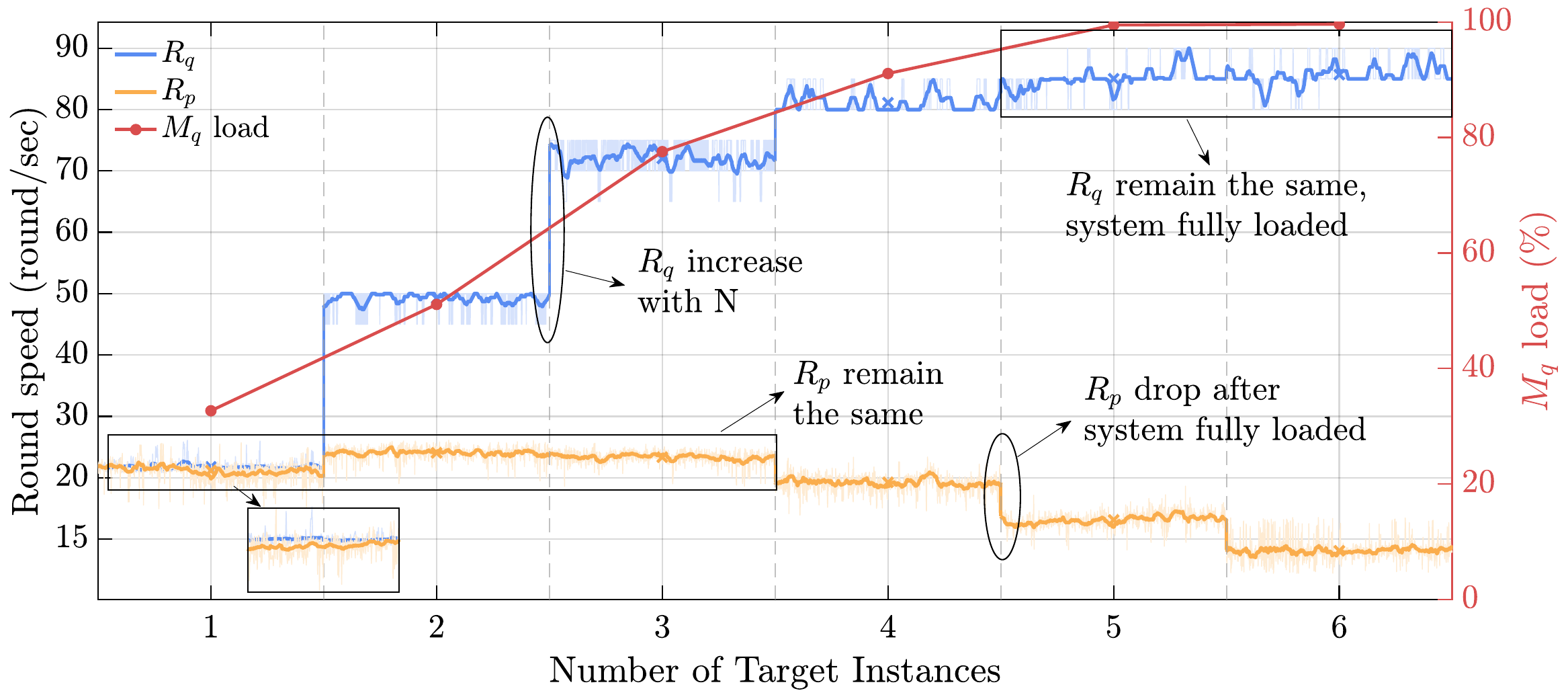} 
    \caption{Round speed $R_q$ and $R_p$, load of $M_q$ under varying numbers of target instances $N$.}
    \label{fig:1_for_N_throughput analysis}
\end{figure}

\begin{table}[t]
\centering
\caption{System performance metrics as the number of target instances $N$ increases.}
\label{tab:system_metrics_T}
\setlength{\tabcolsep}{10pt}
\renewcommand{\arraystretch}{1.15}
% \small
\resizebox{\columnwidth}{!}{%
\begin{tabular}{l|c|cccccc}
\toprule
\textbf{Metric} & \textbf{EAGLE} & \multicolumn{6}{c}{\textbf{StarSD with different numbers of target instances $N$}} \\
\cmidrule(lr){3-8}
 &  & 1 & 2 & 3 & 4 & 5 & 6 \\
\midrule

\multicolumn{8}{l}{\textit{{\textbf{Throughput (tokens/s)}}}} \\
$O_\gamma$        & 94.3  & 79.9  & 167.6 & 246.3 & 266.6 & 285.5 & \textbf{288.7} \\
$O_\gamma^{(i)}$  & 118.5 & 86.6  & \textbf{100.9} & 98.3 & 80.94 & 69.49. & 57.78 \\
\cmidrule(lr){1-8}

% \addlinespace[3pt]
% \multicolumn{8}{l}{\textit{Time breakdown}} \\
% $t_s$ (ms)            & 1.78 & 2.24 & 1.36 & 1.36 & 1.42 & 1.46 & 1.46 \\
% $T_{\text{wait}}$ (s) & \textemdash & 0.00001 & 0.0005 & 0.0006 & 0.0039 & 0.0130 & 0.0223 \\
% $T_{\text{idle}}$ (s) & \textemdash & 0.0306  & 0.0098 & 0.0034 & 0.0006 & 0.0000 & 0.0000 \\
% \cmidrule(lr){1-8}
\multicolumn{8}{l}{\textit{{\textbf{Time breakdown (ms)}}}} \\
$t_s$            & 1.78 & 2.24 & 1.36 & 1.36 & 1.42 & 1.46 & 1.46 \\
$T_{\text{wait}}$ & \textemdash & 0.01 & 0.50 & 0.60 & 3.90 & 13.0 & 22.3 \\
$T_{\text{idle}}$ & \textemdash & 30.6 & 9.8  & 3.4  & 0.6  & 0.0  & 0.0 \\
\cmidrule(lr){1-8}
% \addlinespace[3pt]
\multicolumn{8}{l}{\textit{{\textbf{Round speed (round/s)}}}} \\
$R_q$             & 26.8 & 20.90 & 49.25 & 72.26 & 81.54 & 85.52 & \textbf{85.84} \\
$R_p$             & 26.8 & 20.90 & \textbf{24.07} & 23.38 & 19.73 & 16.64 & 14.10 \\
\cmidrule(lr){1-8}

% \addlinespace[3pt]
\multicolumn{8}{l}{\textit{{\textbf{Utilization (\%)}}}} \\
$M_q$ load   & \textemdash & 33.3 & 50.3 & 75.4 & 91.4 & \textbf{99.5} & \textbf{99.6} \\
\bottomrule
\end{tabular}%
}
\end{table}

% Figure~\ref{fig:1_for_N_throughput analysis} evaluates StarSD under varying numbers of active target instances $M_p$ served by a single draft instance $M_q$.
% As predicted by our system model, increasing $N$ initially improves the draft-side throughput because asynchronous returns from different $M_p$ instances help hide verification and communication latency.
% Beyond a certain point, however, $M_q$ becomes saturated, in which the draft accelerator runs continuously, incoming requests start to queue, and the per-$M_p$ throughput decreases.

% Our measurements corroborate this trend.
% When $N$ increases from 1 to 3, the system throughput $O_\gamma$ from the $M_q$ perspective rises substantially, while the target-side throughput $O_\gamma^{(i)}$ changes only modestly.
% At $N=4$, $M_q$ reaches full utilization, after which $O_\gamma$ approaches its capacity limit and further gains become marginal.
% As $N$ increases to 5 and 6, $O_\gamma$ largely plateaus, whereas $O_\gamma^{(i)}$ drops noticeably due to queueing delays incurred before a request is processed by $M_q$.

%===

To evaluate one-for-many scaling, we vary the number of concurrent target instances $N$ served by a single draft instance $M_q$ and examine how system performance evolves with increasing concurrency. Fig.~\ref{fig:1_for_N_throughput analysis} reports the round speeds $R_q$ and $R_p$ under different $N$. 
As $N$ increases from 1 to 3, $R_q$ grows substantially, while $R_p$ changes only modestly. This behavior matches the \emph{under-loaded} regime in Section~\ref{under load para}, where serving additional targets reduces draft-side idle gaps by overlapping verification and communication latency across requests. 
At $N=4$, $R_q$ approaches its maximum, indicating that $M_q$ is close to full utilization and the system begins to enter the \emph{fully-loaded} regime described in Section~\ref{full load para}. 
Further increasing $N$ provides only marginal improvement in $R_q$, while $R_p$ decreases noticeably due to queueing delays before requests are processed by $M_q$.

To demonstrate the universality of StarSD, we evaluate both intra-node and inter-node configurations as detailed in Appendix~\ref{app:deployment}, and summarize system performance for up to six target instances in Table~\ref{tab:system_metrics_T}.
In the intra-node setting, multiple target instances run on distinct GPUs within a single server, while in the inter-node setting, target instances are distributed across multiple servers connected by a local-area network. Specifically, our deployment uses seven GPUs across two servers. One server hosts four RTX~4090 GPUs, running a draft model $M_q$ and three target instances, while the other server hosts three A100 GPUs, each running one target instance. As $N$ increases, targets are first activated on the RTX~4090 server, and targets on the A100 server are used when $N>3$. The shared $M_q$ server serves all active targets across both machines.
All metrics are measured end-to-end and therefore include communication latency and cross-machine serialization overhead, showing that StarSD remains portable and effective across both single-server and multi-server deployments. Beyond throughput and round speed, we report two delay metrics, i.e., the draft-side idle time $T_{\text{idle}}$, which captures how long $M_q$ remains idle between requests, and the queueing wait time $T_{\text{wait}}$, which measures how long a request waits in the global buffer before service. These metrics explain a counter-intuitive trend in Table~\ref{tab:system_metrics_T}, i.e., when $N=1$, $R_p$ is lower than at $N=2$, consistent with a large $T_{\text{idle}}$. Empirically, excessive idle gaps push $M_q$ into an unfavorable operating regime and degrade its per-token inference speed.

\subsection{Execution Regimes and Performance Dynamics}
\begin{figure*}[t]
    \centering
    \begin{minipage}[t]{0.32\textwidth}
        \centering
        \includegraphics[width=\linewidth]{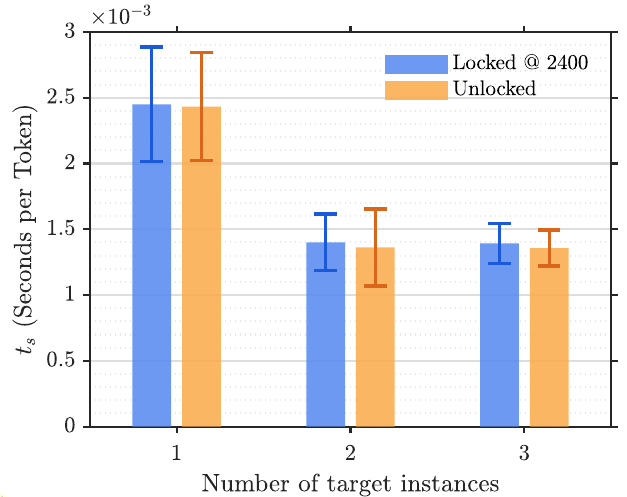}
        % \vspace{-0.5em}
        {\small (a) Effect of locking GPU frequency.}
    \end{minipage}
    \hfill
    \begin{minipage}[t]{0.31\textwidth}
        \centering
        \includegraphics[width=\linewidth]{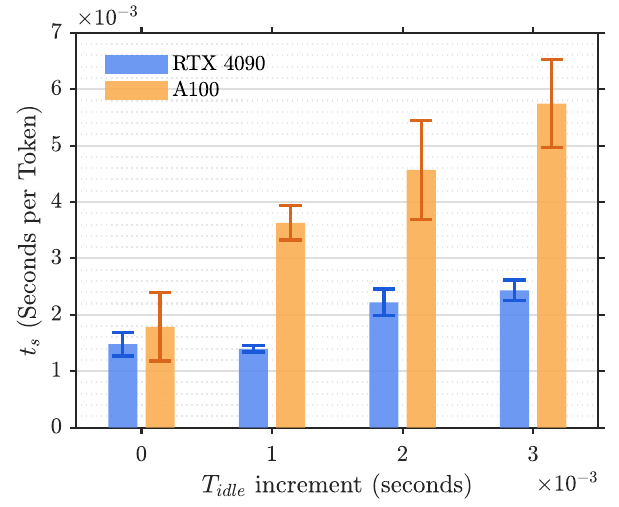}
        % \vspace{-0.5em}
        {\small (b) Effect of $T_{\text{idle}}$.}
    \end{minipage}
    \hfill
    \begin{minipage}[t]{0.31\textwidth}
        \centering
        \includegraphics[width=\linewidth]{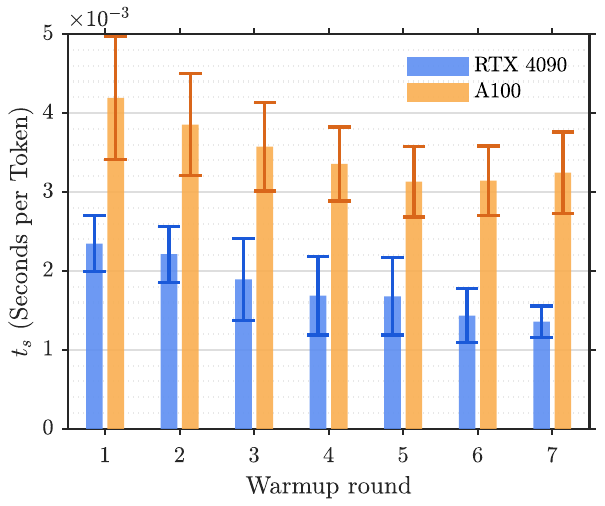}
        % \vspace{-0.5em}
        {\small (c) Effect of warm-up rounds.}
         % at $T_{\text{idle}}=0.03\,$s
    \end{minipage}
    \caption{Per-token generation time analysis of $M_q$ under warm-up at $T_{\text{idle}}=0.03\,$s, GPU frequency control, and $T_{\text{idle}}$ variation.}
    \label{fig:three_ablation_panels}
\end{figure*}

As suggested by the results in Table~\ref{tab:system_metrics_T}, a major source of StarSD's system-level performance gain comes from eliminating cold-start and idle gaps on the draft side. To further analyze this effect and verify that the observed variation in the inference speed of $M_q$ is not specific to a particular accelerator, we conduct controlled experiments by forming a single draft–target pair and evaluating it independently on an RTX~4090 GPU and on an A100 GPU.

We first rule out GPU frequency scaling as a confounding factor by instrumenting the graphics clock before each inference. As shown in Fig.~\ref{fig:three_ablation_panels}(\textit{a}), the same slowdown trend appears under both the default Dynamic Voltage and Frequency Scaling (DVFS) policy and a fixed clock of 2400 MHz, indicating that frequency changes are not the cause. 
To precisely control idle behavior, we then inject artificial delays on the draft side by pausing $M_q$ for a prescribed duration after receiving each request, thereby inducing a target $T_{\text{idle}}$. Figure~\ref{fig:three_ablation_panels}(\textit{b}) shows that per-token inference time remains stable when $T_{\text{idle}}$ is either small or large, but degrades in an intermediate regime where increasing $T_{\text{idle}}$ leads to a monotonic slowdown and higher run-to-run variance. Together, these results indicate that sustained, continuous execution keeps the draft model in a favorable operating state, while intermittent idle gaps introduce transient inefficiencies that increase both latency and variability.

We further examine the warm-up effect shown in Fig.~\ref{fig:three_ablation_panels}(\textit{c}) by inserting additional draft-side executions after an idle gap and before normal generation begins. 
In this setup, $M_q$ generates a small number of warm-up tokens to transition back toward a steady execution state. 
As the number of warm-up rounds increases, the per-token inference speed improves and then stabilizes, with diminishing gains beyond a modest threshold. 
At low warm-up counts, the execution behavior is comparable to the $N=1$ setting in Table~\ref{tab:system_metrics_T}, which exhibits an idle gap of approximately $T_{\text{idle}}\approx 0.03$ second and degraded performance. 
As warm-up rounds increase, the effective idle impact is reduced to about $0.014$ second, leading to faster and more stable inference.

\section{Related Work}
\paragraph{Speculative Decoding.}
Speculative decoding accelerates autoregressive generation by having a lightweight draft propose tokens and a stronger target verify them while preserving the target distribution~\cite{pmlr-v202-leviathan23a,chen2023specsampling}.
Follow-up work improves efficiency by proposing and verifying multiple candidates per iteration.
Tree-based speculative inference verifies a token tree in parallel to better exploit GPU parallelism~\cite{miao2024specinfer, NEURIPS2024_ea1f5f08, 10.1162/tacl_a_00735}, and other variants expand proposals via multi-branch or multi-head drafting under exact acceptance rules~\cite{cai2024medusa}.

\paragraph{LLM Serving Under Memory Pressure.}
LLM serving is often bottlenecked by the KV-cache footprint, which limits batch size and makes co-locating an additional draft model challenging.
vLLM reduces KV-cache fragmentation via paging-style management~\cite{kwon2023efficient}.
Beyond single-GPU settings, systems disaggregate prefill and decode to reduce interference and improve goodput~\cite{zhong2024distserve}, and phase-splitting designs similarly separate prompt and token phases across machines~\cite{patel2024splitwise}.
These results suggest that, under realistic batch sizes and KV-cache pressure, hosting both a large target and a draft on one accelerator can be impractical.

\paragraph{Distributed and Networked Speculative Decoding.}
Placing draft and target on different accelerators turns speculative decoding into a distributed pipeline where communication and asynchronous returns shape utilization.
Prior work explores edge settings, e.g., lightweight edge drafts with server-side verification~\cite{li2025sled}, and distributed speculative inference that leverages multiple drafters, targets and task parallelism~\cite{timor2024distributed,2025arXiv250521594V,2025arXiv250310325G}.
SpecEdge further splits inference between edge and server GPUs via speculative decoding and pipeline-aware scheduling, exchanging only token outputs over the network~\cite{park2025specedge}.
Overall, once drafting and verification are disaggregated, performance depends not only on acceptance behavior but also on communication efficiency and scheduling.

\section{Conclusion}
StarSD is a one-for-many speculative decoding framework that allows a single draft model to efficiently serve multiple target instances. 
By keeping the draft side work-conserving under asynchronous verification, StarSD eliminates idle gaps that limit the efficiency in distributed deployments. 
Experiments across heterogeneous settings show that this design improves utilization and throughput without modifying model weights or generation semantics, making speculative decoding more practical under realistic memory and deployment constraints.

\section*{Impact Statement}
StarSD enables one-for-many speculative decoding by decoupling drafting and verification across distributed accelerators, allowing a single draft model to serve multiple target instances in a work-conserving manner. This can improve utilization and reduce deployment overheads under memory and resource constraints, supporting more flexible allocation across heterogeneous devices in both cluster and edge serving environments.
However, StarSD may also introduce risks. By lowering the cost of LLM inference, it can make large-scale misuse easier. More importantly, the shared draft model can become a concentrated trust point, it may transiently hold multiple sessions' prompts, intermediate prefixes, and cached decoding state across served targets. If the draft service is compromised, an attacker could potentially access sensitive user inputs or session traces, leading to severe privacy leakage.
To mitigate these risks, deployments should treat the draft service as a high-sensitivity component. Enforce strong isolation between sessions, minimize and protect in-memory and on-disk artifacts including KV caches and logs, use authenticated and encrypted channels for inter-node communication, and apply access control, rate limiting, and auditing to detect abuse and intrusion.

% In the unusual situation where you want a paper to appear in the
% references without citing it in the main text, use \nocite
% \nocite{langley00}

\bibliography{Ref}
\bibliographystyle{icml2026}

%%%%%%%%%%%%%%%%%%%%%%%%%%%%%%%%%%%%%%%%%%%%%%%%%%%%%%%%%%%%%%%%%%%%%%%%%%%%%%%
%%%%%%%%%%%%%%%%%%%%%%%%%%%%%%%%%%%%%%%%%%%%%%%%%%%%%%%%%%%%%%%%%%%%%%%%%%%%%%%
% APPENDIX
%%%%%%%%%%%%%%%%%%%%%%%%%%%%%%%%%%%%%%%%%%%%%%%%%%%%%%%%%%%%%%%%%%%%%%%%%%%%%%%
%%%%%%%%%%%%%%%%%%%%%%%%%%%%%%%%%%%%%%%%%%%%%%%%%%%%%%%%%%%%%%%%%%%%%%%%%%%%%%%
\newpage
\appendix
\onecolumn
% In your supplementary / appendix .tex

% (recommended)
% \usepackage{cleveref}

% -------------------------
% Appendix starts here
% -------------------------
\appendix
\numberwithin{equation}{section}

\section{Overview of Supplementary Material}
\label{app:overview}

This appendix provides supplementary algorithmic descriptions, deployment illustrations, and experimental artifacts that support the results in the main text.
For reproducibility, command-line snippets and representative log excerpts are highlighted in \texttt{typewriter font}.
The appendix is organized as follows:

\begin{enumerate}
    \item \textbf{Motivation of StarSD:} We motivate StarSD from practical deployment constraints and the resulting inefficiencies of distributed speculative decoding, explaining why a one-for-many draft service is desirable in~\Cref{motivation}.
    
    \item \textbf{Mathematical Justification:}
    We provide further mathematical Justification in~\Cref{app:math}.
    
    \item \textbf{Algorithm Details:}
    We describe the EAGLE-based speculative decoding procedure used in StarSD, including the key steps of proposal, verification, and correction, in~\Cref{app:eagle}.

    \item \textbf{Implementation Details:}
    We present system-level implementation details of StarSD, including the one-time connection establishment protocol, the steady-state request execution loop, the decoupled communication that computes runtime at the draft server, and tag-indexed per-session state management, e.g., KV cache lifecycle and speculative metadata in~\Cref{app:impl}.

    \item \textbf{Extended GPU Experiment Details:}
    We provide additional reproducibility artifacts for the GPU frequency-control study and draft-side inference-time measurements in~\Cref{app:gpu_commands,app:draft_inference}.

    \item \textbf{Deployment Illustrations:}
    We summarize representative deployment topologies and illustrate the end-to-end distributed pipeline in~\Cref{app:deployment}.

    \item \textbf{Discussion and Future Directions:}
    We discuss limitations of the current design and outline future directions in~\Cref{app:future}.
\end{enumerate}

\section{Motivation}
\label{motivation}
Recent advances in speculative decoding have largely focused on improving the draft model by better aligning its token distribution with the target model. However, most existing frameworks implicitly assume that the draft instance $M_q$ and the target instance $M_p$ can be co-located on a single accelerator, e.g., a GPU. In practical deployments, this assumption often does not hold. Many real systems operate under heterogeneous and resource-constrained settings, including edge clusters and mid-scale cloud servers, where accelerator memory is limited and shared by multiple services. The weights may barely fit, but the remaining memory must still accommodate KV cache and runtime buffers. In practice, this sharply reduces the KV cache budget, hence affect the context length, which is especially problematic in multi-tenant deployments where one GPU also serves other inference tasks. Even with compression or quantization, co-locating $M_q$ and $M_p$ can remain impractical or introduce nontrivial accuracy, engineering, and operational costs.

A natural response is to disaggregate the deployed instances $M_q$ and $M_p$ across different accelerators, e.g., GPU. Distributed speculative decoding has been explored in prior work, but most existing designs do not address a core inefficiency inherent to the draft-then-verify workflow. When the target instance $M_p$ is verifying draft tokens, the accelerator hosting the draft instance $M_q$ is idle; when $M_q$ is drafting, the $M_p$ accelerator is underutilized. This alternating execution pattern reduces effective hardware utilization and makes $M_q$ execution bursty, with short compute phases separated by idle gaps induced by verification latency and inter-accelerator communication. These idle gaps have system-level consequences that are also largely overlooked in prior work. Even when the decoding depth is unchanged, bursty execution prevents $M_q$ from sustaining a stable high-throughput execution state and may trigger repeated per-iteration transients (e.g., kernel reactivation and cache coldness), leading to increased draft-step latency and higher variance across decoding iterations.

More fundamentally, the generated output is ultimately dictated by the target instance $M_p$, while $M_q$ serves as auxiliary compute overhead. From a system perspective, this overhead should be amortized across multiple target instances rather than dedicated to a single $M_p$. These observations motivate a one-for-many design in which a single $M_q$ multiplexes requests from multiple $M_p^{(i)}$. Such a design allows verification responses to arrive more continuously, reduces idle gaps on the draft accelerator, and improves system-level utilization. 
% Building on this idea, we propose \textbf{StarSD}, a distributed speculative decoding framework that implements the one-for-many architecture.
% By decoupling $M_q$ and $M_p^{(i)}$, orchestrating their interaction through multiplexing, StarSD reduces per-request draft overhead, stabilizes $M_q$ execution, and scales speculative decoding to realistic memory-constrained deployments.

\section{Mathematical Justification}
\label{app:math}

% \subsection{Metric Justification of Round Speed}
% We consider a fixed dataset $\mathcal{D}$, a fixed set of decoding settings, and a fixed model pair $(M_p, M_q)$ with depth $d$.
% Across deployments, we keep the model weights and the decoding algorithm unchanged.
% Hence, for any prefix $y$ that arises during decoding on $\mathcal{D}$, the next-token distributions $p_i(\cdot\mid y)$ and $q(\cdot\mid y)$ are invariant up to numerical nondeterminism, and the induced acceptance behavior is invariant in expectation.
% Therefore, under the standard approximation, $\mathbb{E}[l_i(d)]$ is deployment-invariant, and $\mathbb{E}[l]_\gamma=\sum_{i=1}^{N}\mathbb{E}[l_i(d)]$ is also deployment-invariant.

% By definition in~\eqref{eq:Ogamma}, the per-iteration throughput satisfies
% \[
% O_\gamma=\frac{\mathbb{E}[l]_\gamma}{T_\gamma}
% =\mathbb{E}[l]_\gamma R_{\text{iter}},\quad
% R_{\text{iter}}\triangleq \frac{1}{T_\gamma}.
% \]

% Since $\mathbb{E}[l]_\gamma$ is constant by considering the entire $\mathcal{D}$, comparing $R_{\text{iter}}$ is equivalent to comparing
% $O_\gamma$,
% for any deployments $A,B$, $\;O_\gamma^{(A)}/O_\gamma^{(B)}=R_{\text{iter}}^{(A)}/R_{\text{iter}}^{(B)}$.
% Under a fixed $\mathcal{D}$ and decoding setup, comparing $R_{\text{iter}}$ is equivalent to comparing $O_\gamma$ because $\mathbb{E}[l]_\gamma$ is deployment-invariant.
% Meanwhile, $R_{\text{iter}}$ is more stable across iterations since it factors out the stochastic fluctuation of the realized accept length $l_\gamma$ that makes $O_\gamma$ noisy.

\subsection{Metric Justification of Draft and Verify Round Speed}
\label{sec:rq_rp_metric}

We consider a fixed dataset $\mathcal{D}$, a fixed decoding configuration, and a fixed model pair $(M_p,M_q)$ with depth $d$.
Across deployments, we keep the model weights and the decoding algorithm unchanged.
Hence, for any prefix $y$ encountered during decoding on $\mathcal{D}$, the next-token distributions $p(\cdot\mid y)$ and $q(\cdot\mid y)$ are invariant up to numerical nondeterminism, and the induced acceptance behavior is invariant in expectation.
Under the standard approximation, the expected accepted length per draft proposal
\begin{equation}
\bar{\ell}\triangleq \mathbb{E}[l(d)],
\label{eq:ell_bar}
\end{equation}
is deployment-invariant.

\paragraph{Round speeds.}
Let $R_q$ denote the rate per unit time at which the draft model executes the token-generation routine that produces a depth-$d$ proposal, each such execution yields an accepted length $l(d)$ in the subsequent verification step.
Let $R_p$ denote the rate per unit time at which a target instance executes a verification routine.
Both $R_q$ and $R_p$ are measured in the steady state.

\paragraph{Throughput as tokens per unit time.}
Recall that the system throughput~\eqref{eq:Ogamma} from the draft-side view is
\begin{equation}
O_\gamma \triangleq \frac{\mathbb{E}[l]_\gamma}{T_\gamma},
\label{eq:Ogamma_recall}
\end{equation}
where $\mathbb{E}[l]_\gamma$ is the expected number of accepted tokens contributed in an interval of duration $T_\gamma$.
From the target-side view, we similarly define the per-target throughput for instance $i$ as
\begin{equation}
O_\gamma^{(i)} \triangleq \frac{\mathbb{E}[l]_\gamma^{(i)}}{T_\gamma}.
\label{eq:Ogamma_i_recall}
\end{equation}

\paragraph{Positive correlation with $R_q$.}
Over any time window of length $T_\gamma$, the number of draft-generation executions is approximately $R_q T_\gamma$ in expectation.
Each execution contributes an accepted length with expectation $\bar{\ell}$.
Therefore,
\begin{equation}
\mathbb{E}[l]_\gamma \;\approx\; (R_q T_\gamma)\,\bar{\ell},
\qquad
O_\gamma \;=\;\frac{\mathbb{E}[l]_\gamma}{T_\gamma}\;\approx\;\bar{\ell}\,R_q.
\label{eq:Ogamma_vs_Rq}
\end{equation}
Since $\bar{\ell}>0$ under fixed $\mathcal{D}$ and decoding setup, $O_\gamma$ is monotone increasing in $R_q$, i.e., $R_q$ is positively correlated with $O_\gamma$.

\paragraph{Positive correlation with $R_p$.}
Fix a target instance $M_p^{(i)}$.
Over a window of length $T_\gamma$, the expected number of verification executions by this instance is approximately $R_p T_\gamma$.
Each verification round finalizes one proposal and advances the decoded sequence by an accepted length with expectation $\bar{\ell}$.
Thus,
\begin{equation}
\mathbb{E}[l]_\gamma^{(i)} \;\approx\; (R_p T_\gamma)\,\bar{\ell},
\qquad
O_\gamma^{(i)} \;=\;\frac{\mathbb{E}[l]_\gamma^{(i)}}{T_\gamma}\;\approx\;\bar{\ell}\,R_p,
\label{eq:Ogamma_i_vs_Rp}
\end{equation}
which implies that $O_\gamma^{(i)}$ is monotone increasing in $R_p$ and hence $R_p$ is positively correlated with the throughput of the $i$-th target instance.

\paragraph{Bottleneck regime.}
In practice, draft and verification must be matched, so the realized throughput is governed by the slower side.
This is consistent with the upper-bound intuition
\begin{equation}
O_\gamma \;\lesssim\; \bar{\ell}\,\min\{R_q,\,R_p\},
\label{eq:Ogamma_min_bound}
\end{equation}
and increasing either $R_q$ or $R_p$ cannot decrease throughput.

\paragraph{Special case: 1-for-1 in StarSD.}
When there is only one target instance, i.e., $N=1$, the system has a single verification stream, so the per-target throughput coincides with the system throughput
\begin{equation}
O_\gamma^{(i)} = O_\gamma.
\label{eq:Ogamma_eq_Oi_N1}
\end{equation}
Moreover, the execution forms a tightly coupled loop, where
each draft-generation round on $M_q$ produces exactly one proposal, and each proposal triggers exactly one verification round on $M_p$.
Therefore, in steady state, the number of draft rounds and verification rounds are equal over any long horizon, implying
\begin{equation}
R_q = R_p.
\label{eq:Rp_eq_Rq_N1}
\end{equation}
Combining~\eqref{eq:Ogamma_i_vs_Rp} and~\eqref{eq:Ogamma_eq_Oi_N1}, both metrics induce the same monotone proxy for throughput in the 1-for-1 setting:
\begin{equation}
O_\gamma = \bar{\ell}\,R_q = \bar{\ell}\,R_p.
\label{eq:Ogamma_N1}
\end{equation}

\section{EAGLE Algorithm Details}
\label{app:eagle}

\paragraph{Setting and notation.}
Let $y$ denote the \emph{verified prefix} at the beginning of one speculative iteration.
EAGLE uses a draft model $M_q$ to construct a $k$-ary proposal tree of depth $d$, and a target model $M_p$ to verify candidate paths.
A candidate path is represented as $\texttt{tree\_path} = [x_1,\ldots,x_d]$.
During tree expansion, the draft model $M_q$ produces next-token distributions at each depth. We denote the distribution at depth $t$ by $q_t(\cdot)$.
The correction step uses the corresponding draft probability $q_{l+1}(x)$ in
$\texttt{norm}(\max(0, p_{l+1}(x)-q_{l+1}(x)))$.
During verification, running $M_p$ along the prefixes induced by a path produces target-side distributions $p_1(\cdot),\ldots,p_{d+1}(\cdot)$.

\begin{algorithm}[!t]
\caption{EAGLE Speculative Decoding}
\label{alg:tree_search}
\begin{algorithmic}[1]
\STATE \textbf{Input:} $M_q$, $M_p$, verified prefix $y$, depth $d$, top-$k$ $k$
\STATE \textbf{Output:} $y + \texttt{target\_path}$ and updated \texttt{hidden\_state}
\vspace{0.3em}

\STATE \texttt{accept\_length} $\gets 0$ \hfill \textit{// best accepted length among all verified paths}
\STATE \texttt{target\_path} $\gets [\ ]$ \hfill \textit{// path selected for this iteration}
\STATE

\IF{\texttt{First Iteration}}
    \STATE \textit{// Initialize root token and hidden state to start $M_q$}
    \STATE $x,\ \texttt{hidden\_state} \sim p(y)$
    \STATE \texttt{tree} $\gets x$ \hfill \textit{// root of the proposal tree}
\ELSE
    \STATE \textit{// Use the last verified token as the tree root}
    \STATE \texttt{tree} $\gets y[-1]$
\ENDIF
\STATE

\STATE \textit{// Build a depth-$d$ proposal tree with branching factor $k$}
\FOR{$t = 1$ to $d$}
    \STATE \textit{// KV-cache maintenance for draft expansion (prefix init or incremental update)}
    \IF{\texttt{kv\_cache} is empty}
        \STATE \texttt{kv\_cache} $\gets$ \textsc{InitKV}$(y,\ \texttt{hidden\_state})$
        \hfill \textit{// cache corresponds to the verified prefix}
    \ELSE
        \STATE \texttt{kv\_cache} $\gets$ \textsc{UpdateKV}$(\texttt{target\_path},\ \texttt{kv\_cache},\ \texttt{hidden\_state})$
        \hfill \textit{// follow the currently selected path}
    \ENDIF

    \STATE \textit{// Expand one tree layer in parallel using $M_q$ (top-$k$ per node)}
    \STATE \texttt{tree} $\gets$ \textsc{ExpandLayer}$(\texttt{tree},\ \texttt{hidden\_state},\ M_q,\ k)$
\ENDFOR
\STATE

\STATE \textit{// Verify candidate paths in parallel}
\FOR{$n = 1$ to $k^{d}$}
    \STATE \texttt{tree\_path} $\gets$ \texttt{tree}[$n$]
    \STATE \textit{// \texttt{tree\_path} corresponds to $\pi=[x_1,\ldots,x_d]$}
    \STATE \textit{// Run $M_p$ along prefixes induced by the path, can be executed in parallel}
    \STATE $p_1(x),\ldots,p_{d+1}(x) \gets M_p(y),\ldots,M_p(y+[x_1,\ldots,x_d])$

    \STATE \textit{// Determine accept length on this path via stochastic test}
    \STATE $r_1 \sim U(0,1),\ldots,r_d \sim U(0,1)$
    \STATE $l \gets \min\left(\{i-1 \mid 1 \le i \le d,\ r_i > p_i(x)\} \cup \{d\}\right)$

    \STATE \textit{// Prepare continuation (either accept full prefix, or apply one-step correction)}
    \STATE $p'(x) \gets p_{l+1}(x)$
    \IF{$l < d$}
        \STATE $p'(x) \gets \texttt{norm}\!\left(\max(0,\ p_{l+1}(x) - q_{l+1}(x))\right)$
        \STATE \texttt{tree\_path} $\gets$ \texttt{tree\_path}[:$l$].\texttt{append}$(t \sim p'(x))$
        \hfill \textit{// accept first $l$ tokens, then sample one corrected token}
    \ELSE
        \STATE \texttt{tree\_path} $\gets$ \texttt{tree\_path}[:$l$]
        \hfill \textit{// accept the whole depth-$d$ path}
    \ENDIF

    \STATE \textit{// Select the path with the largest accept length}
    \IF{$l >$ \texttt{accept\_length}}
        \STATE \texttt{accept\_length} $\gets l$
        \STATE \texttt{target\_path} $\gets$ \texttt{tree\_path}
    \ENDIF
\ENDFOR
\STATE

\STATE \textbf{return} $y + \texttt{target\_path}$,\ \texttt{hidden\_state}
\end{algorithmic}
\end{algorithm}

\textbf{Initialization.}
EAGLE starts each iteration from a verified prefix $y$.
In the first iteration, $M_q$ samples a root token $x$ and an accompanying \texttt{hidden\_state} from the distribution conditioned on $y$ provided by $M_p$, in later iterations, it anchors the new proposal tree at the last verified token $y[-1]$.

\textbf{Tree expansion with caching.}
The proposal tree is expanded layer-by-layer to depth $d$ with top-$k$ branching.
Before expanding each layer, the algorithm initializes the draft KV cache from $y$ or incrementally updates it following the currently selected \texttt{target\_path}.
The expansion routine \textsc{ExpandLayer} is executed in parallel over the current tree frontier, and the draft probabilities $q_t(\cdot)$ needed by the correction step are stored in the tree metadata.

\textbf{Verification and accept-length computation.}
For each candidate path $\pi=[x_1,\ldots,x_d]$, the target model $M_p$ is evaluated on the sequence of induced prefixes
$y$, $y+x_1$, \ldots, $y+[x_1,\ldots,x_d]$ to produce $p_1(\cdot),\ldots,p_{d+1}(\cdot)$.
The algorithm samples $r_1,\ldots,r_d$ and gets the accepted length $l$ using the test in Alg.~\ref{alg:tree_search}.

\textbf{Correction and best-path selection.}
If $l<d$, the algorithm forms a corrected distribution
$p'(x)\gets \texttt{norm}(\max(0,\,p_{l+1}(x)-q_{l+1}(x)))$ and appends one corrected token after the accepted prefix.
Otherwise, it keeps the accepted prefix of length $l$ (which equals $d$ in this case).
Finally, it selects the path with the largest accepted length and returns $y+\texttt{target\_path}$ along with the updated \texttt{hidden\_state} of the selected path.
\FloatBarrier

\section{Implementation Details of StarSD}
\label{app:impl}

This appendix provides implementation-level details and covers the following topics.

\begin{itemize}
    \item One-time connection establishment protocol.
    \item Steady-state request execution loop and payload handling.
    \item Decoupling communication from draft inference on $M_q$ using separate logical components or threads.
    \item Tag-indexed state management, including KV cache lifecycle and other per-session metadata.
    \item How these details realize the design requirements in Section~5.
    \item Work-conserving execution to support \eqref{eq:Tgamma_full}-\eqref{eq:Tgamma_under}.
    \item Per-session isolation to keep $\beta_i(y)$ in \eqref{eq:beta_cond} and $l_i(d)$ in \eqref{eq:tail_sum_li} well-defined.
    \item Clean time decomposition to preserve $Z(d)=t_c+t_v$ in \eqref{eq:Zd} and $S(d)=d t_s$ in \eqref{eq:Sd}.
\end{itemize}

\subsection{Connection Establishment}
\label{app:impl:handshake}
\paragraph{One-time handshake per $M_p^{(i)}$.}
Each $M_p^{(i)}$ performs an initial handshake with $M_q$ through a public port.
Upon the first request from this $M_p^{(i)}$, $M_q$ allocates (i) a private communication port and (ii) a unique tag.
All subsequent messages between this $M_p^{(i)}$ and $M_q$ are sent through the private port, and the tag is attached to every request to identify the corresponding per-session states at $M_q$.
By isolating the control plane, i.e., handshake from the data plane, i.e., per-iteration messages, we ensure $t_c$ in \eqref{eq:Zd} corresponds to the minimal, repeated payload transfer rather than connection management overhead.

\subsection{Steady-State Request Execution}
\label{app:impl:loop}
\paragraph{Request execution loop.}
Given a user prompt, $M_p^{(i)}$ begins decoding and produces an initial token.
Then, in each iteration, $M_p^{(i)}$ sends the verified prefix or the newly verified token, along with its tag, to $M_q$ via the private port.
The payload follows the minimal-transfer design of the distributed EAGLE setting to reduce the per-iteration communication cost.
Upon receiving a tagged request, $M_q$ pushes it into the global request buffer.
When the $M_q$ inference worker becomes available, it selects a buffered request using a work-conserving scheduling policy (e.g., first-in-first-out).
Before inference, $M_q$ retrieves the corresponding KV cache using the tag, executes draft inference, and updates the KV cache upon completion.
The draft output is written to the per-$M_p^{(i)}$ communication channel and sent back to $M_p^{(i)}$ via the corresponding private port, where it is verified and used to form the next request.

\subsection{Decoupling Communication and Draft Inference on $M_q$}
\label{app:impl:decouple}
\paragraph{Separate communication and inference components.}
As shown in Fig.~\ref{fig:system model}, $M_q$ consists of two logical components: (i) a communication component that receives requests and appends them to the global request buffer, and (ii) an inference component that repeatedly selects and executes buffered requests as long as the buffer is non-empty.
Each active $M_p^{(i)}$ is associated with its own private port and a dedicated communication channel at $M_q$, preventing cross-session interference while keeping $M_q$ work-conserving.
This decoupling is required for the service-time model $S(d)=d t_s$ in \eqref{eq:Sd} in which I/O stalls do not backpressure the compute loop, so the measured draft-side service remains dominated by pure compute rather than network latency.
In addition, it helps keep the per-iteration transfer cost accounted for in $Z(d)=t_c+t_v$ in \eqref{eq:Zd}, rather than being mixed into the compute-side service time.

\subsection{Tag-Indexed State Management}
\label{app:impl:state}
\paragraph{KV cache isolation and lifecycle.}
$M_q$ maintains a mapping from tag to KV cache.
When an $M_p^{(i)}$ starts a new decoding session, $M_q$ initializes an empty KV cache entry for its tag; during each iteration, the inference component loads and updates this KV cache to avoid recomputation of historical activations.
When the decoding session finishes, the KV cache entry associated with the tag is released to reclaim memory.
This tag-indexed management prevents cross-session contamination and enables efficient lookup and update with many active $M_p^{(i)}$.
Importantly, this isolation is necessary for session-conditioned quantities such as $\beta_i(y)$ in \eqref{eq:beta_cond} and $l_i(d)$ in \eqref{eq:tail_sum_li} to remain well-posed, where each return must advance the \emph{same} session prefix and KV state associated with that tag.

\paragraph{Extending to other per-$M_p^{(i)}$ metadata.}
The same tag-indexed design applies to other per-$M_p^{(i)}$ states maintained at $M_q$, such as speculative metadata and bookkeeping information required by the execution loop.
By enforcing strict per-tag isolation, StarSD preserves correctness under concurrency while keeping $M_q$ scalable in both memory usage and scheduling.

\section{Extended GPU Experiment Details}
\label{app:gpu_commands}

This section supplements the frequency-sensitivity study in Fig.~\ref{fig:three_ablation_panels}(\textit{b}) by verifying that GPU frequency control is indeed taking effect during our experiments. To verify whether GPU frequency control is effective, we log the measured frequency just before each draft-side inference begins in both the default setting and when the GPU graphics clock is locked.

\paragraph{GPU frequency lock command.}
We enable persistence mode and lock the graphics clock to 2400\, MHz using the following commands:
\begin{cmdbox}{GPU Frequency Lock Commands}
sudo nvidia-smi -pm 1
sudo nvidia-smi -i 0 -lgc 2400,2400
\end{cmdbox}
Unless otherwise specified, the ``Default (unlocked)'' setting means leaving the GPU in its default DVFS behavior, without applying a fixed graphics clock.

\paragraph{Frequency probing via NVML.}
After applying the frequency lock or leaving it unlocked for the default setting, we probe the instantaneous graphics clock \emph{immediately before} the draft model starts inference in each iteration. Concretely, we query NVML through \texttt{pynvml} by using the following command:
\begin{cmdbox}{GPU Frequency Measurement}
pynvml.nvmlDeviceGetClockInfo(\_gpu\_handle, pynvml.NVML\_CLOCK\_GRAPHICS)
\end{cmdbox}
We record these measurements for each experimental configuration, which means different numbers of target instances $N$. In post-processing, we discard spurious outliers where the returned frequency is very low, which can occur during handle initialization or end of string task, and then aggregate the remaining samples to produce the summary shown in Fig.~\ref{fig:gpu_frequency_comparison}.

\paragraph{Observed frequency stability.}
Figure.~\ref{fig:gpu_frequency_comparison} confirms that the locked configuration stays at 2400\,MHz across all tested $N$, whereas the default configuration exhibits higher and varying graphics clocks. This verification ensures that the trends reported in Fig.~\ref{fig:three_ablation_panels}(\textit{b}) are valid.

\begin{figure}[]
    \centering
    \includegraphics[width=0.63\linewidth]{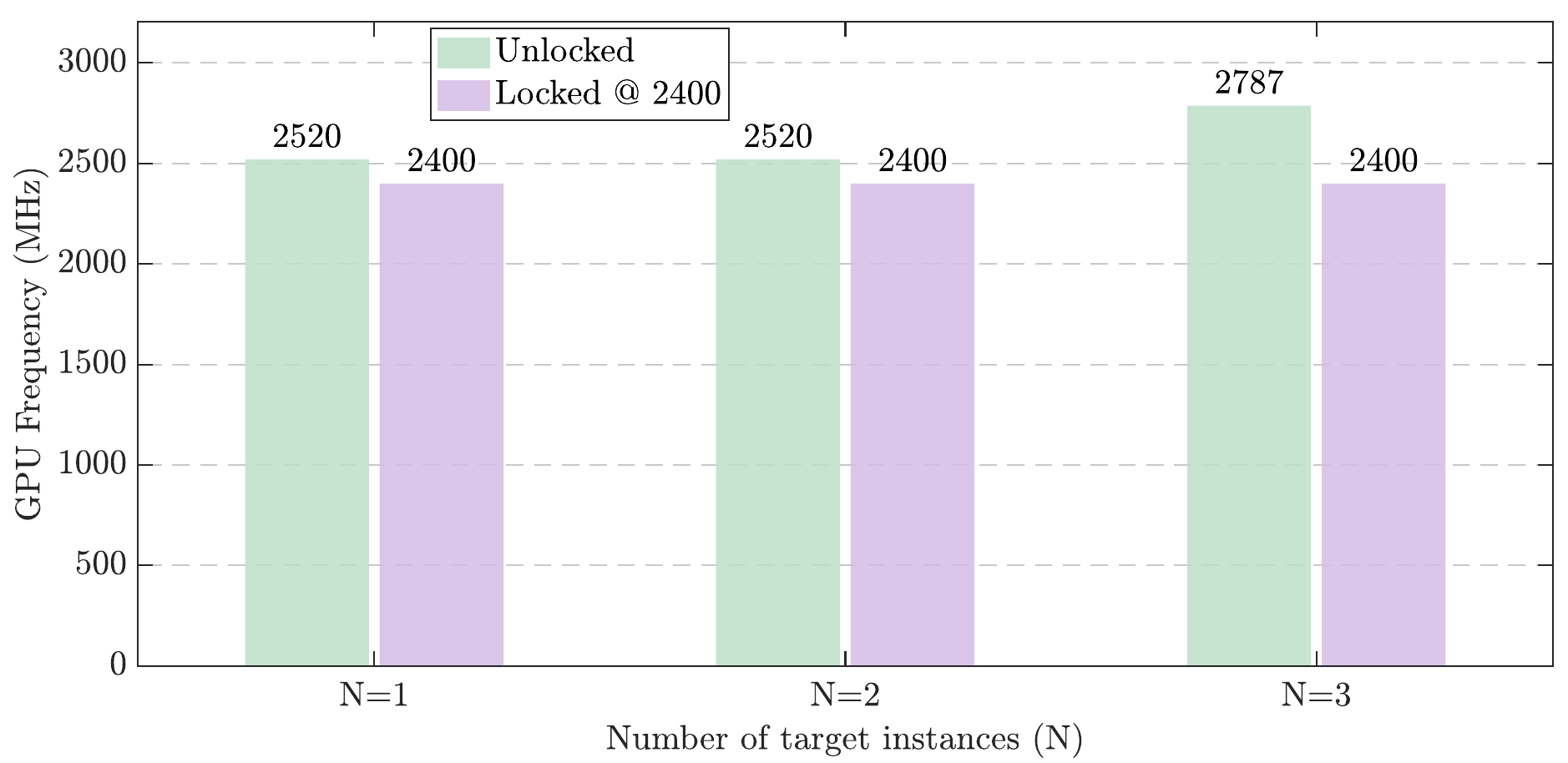}
    \caption{Measured GPU graphics clock frequency before each draft-side inference under the default DVFS setting versus a fixed 2400\, MHz lock.}
    \label{fig:gpu_frequency_comparison}
\end{figure}
\FloatBarrier

\section{Draft Inference Time Under Different Experimental Settings}
\label{app:draft_inference}
\paragraph{Draft inference speed}
\label{app:draft_infer_speed}

We report the draft-side inference time measured during our experiments.
To obtain accurate GPU kernel timing under CUDA's asynchronous execution, we use \texttt{torch.cuda.Event(enable\_timing=True)} and compute elapsed time via \texttt{event.elapsed\_time()} with an explicit \texttt{torch.cuda.synchronize()} to ensure all kernels have completed.

\begin{cmdbox}{CUDA Event Timing Snippet}
starter = torch.cuda.Event(enable_timing=True)
ender = torch.cuda.Event(enable_timing=True)

starter.record()
# draft forward pass
ender.record()

torch.cuda.synchronize()
infer_ms = starter.elapsed_time(ender)   # milliseconds
\end{cmdbox}

\paragraph{Draft inference time trend.}
Figure.~\ref{fig:draft_infer_speed_curve} plots the measured draft-side inference time under different numbers of target instances $N$ on RTX~4090 server.
As $N$ increases from 1 to 2 or above, the inference time drops markedly, where $N{=}1$ trace concentrates around $\sim$16\, ms with pronounced jitter and frequent spikes, whereas the $N{=}2$ and $N{=}3$ traces form much tighter bands around $\sim$9\, ms with only occasional outliers.

\begin{figure}[!htbp]
    \centering
    \includegraphics[width=0.75\linewidth]{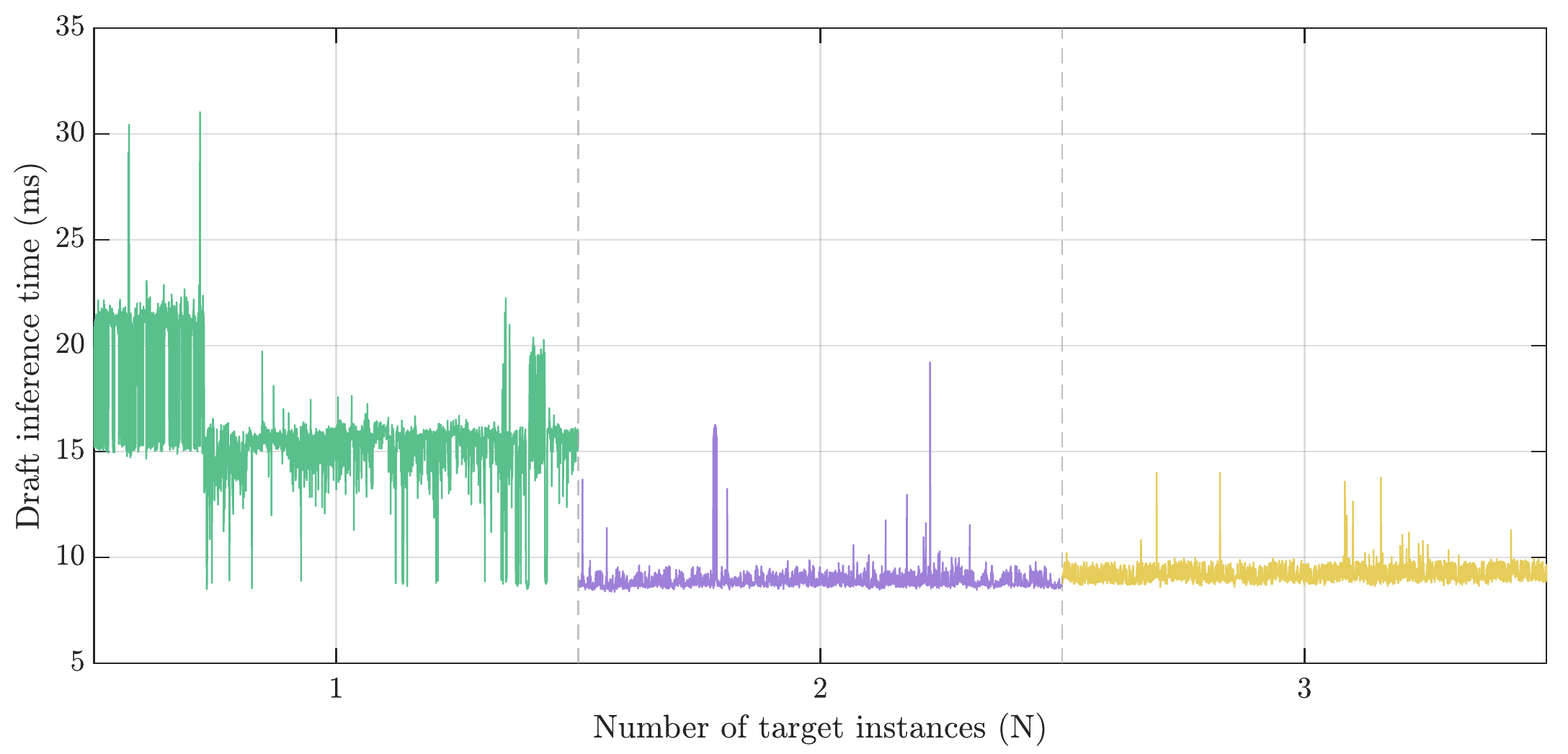}
    \caption{Draft-side inference time under different numbers of target instances $N$.}
    \label{fig:draft_infer_speed_curve}
\end{figure}

\paragraph{Summary statistics.}
Table~\ref{tab:draft_infer_speed_stats} summarizes the draft-side inference time. As $N$ increases from 1 to 2-3, the mean inference time decreases substantially and the variance drops sharply, indicating a faster and much more stable draft-side execution.

\begin{table}[!htbp]
\centering
\caption{Draft-side inference time statistics.}
\label{tab:draft_infer_speed_stats}
\begin{tabular}{lcc}
\toprule
$N$ & Mean (ms) & Variance ($\mathrm{ms}^2$) \\
\midrule
1 & 16.06 & 6.34 \\
2 & 8.86  & 0.30  \\
3 & 9.17 & 0.12  \\
\bottomrule
\end{tabular}
\end{table}

\section{Deployment Illustrations}
\label{app:deployment}

StarSD is designed to support disaggregated speculative decoding across a range of deployment topologies as shown in Fig.~\ref{fig:topology}, from intra-node multi-GPU setups to networked multi-node clusters and heterogeneous edge-server environments.
To make these supported scenarios explicit, we summarize three representative deployment modes:

\begin{itemize}
    \item \textbf{Intra-Node Multi-GPU (Intra-MG)}: \emph{machine-internal cross-GPU} within a single machine.
    \item \textbf{Inter-Node Cluster (Inter-CL)}: \emph{cross-machine} deployment over a network.
    \item \textbf{Edge-Server Hybrid (Inter-ES)}: \emph{cross-device} deployment (e.g., Jetson-PC) emphasizing heterogeneity and portability.
\end{itemize}

\begin{figure}[t]
    \centering
    \includegraphics[width=0.65\textwidth]{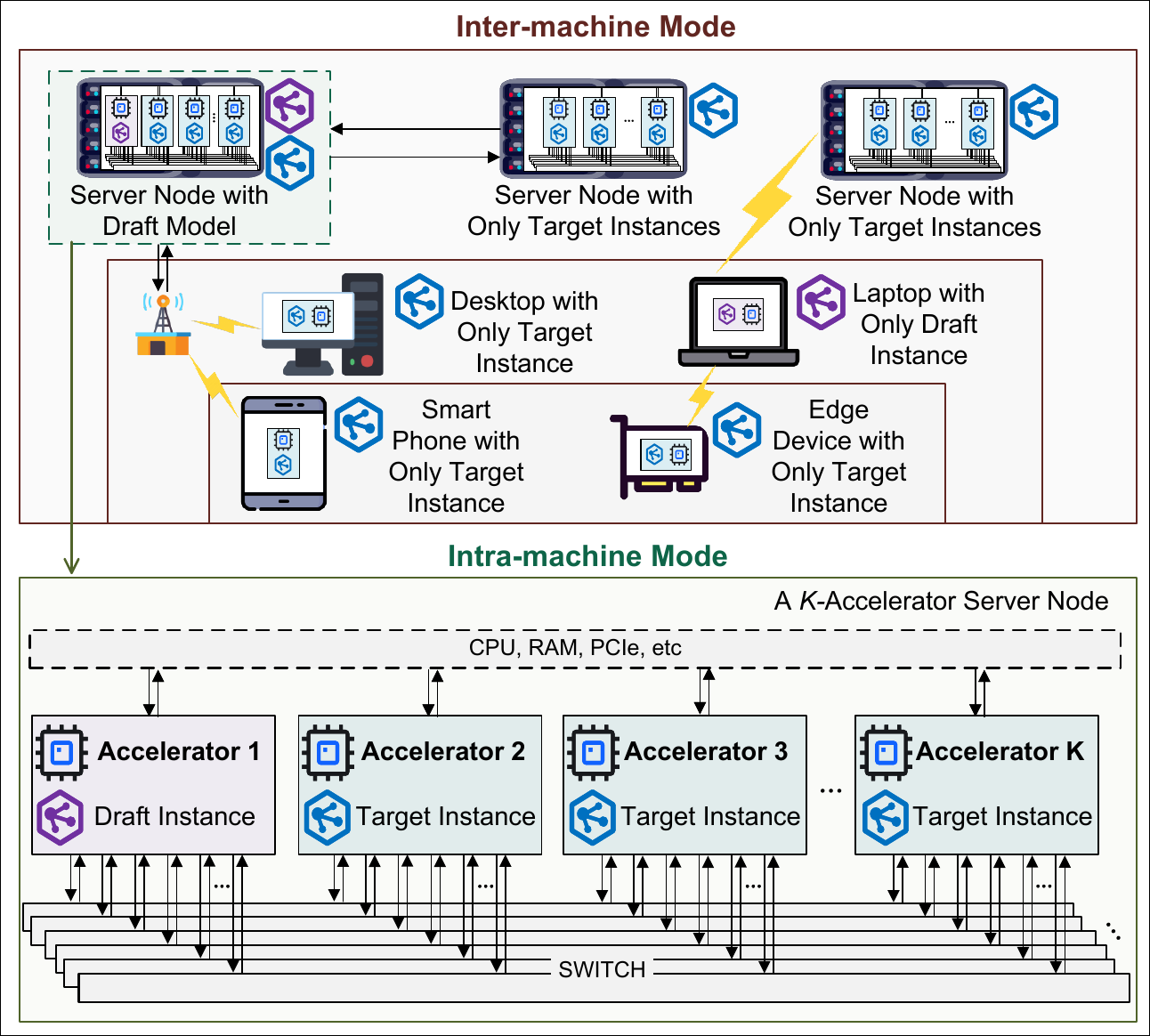}
    \caption{\textbf{Deployment illustrations for StarSD.}
    \textbf{(A) Intra-MG:} draft instance runs on one GPU and serves multiple target instances on other GPUs within the same machine, where communication can leverage intra-machine high-speed GPU interconnects, e.g., NVLink.
    \textbf{(B) Inter-CL:} draft instance runs as a draft server on one machine while target instances are placed on other machines, communication crosses the network boundary and is implemented via socket-based messaging.
    \textbf{(C) Inter-ES:} draft instance runs on one edge device, e.g., a laptop, whereas the target instances run on other edge devices, e.g., NVIDIA Jetson, and communication follows the same Inter-CL setup.}
    \label{fig:topology}
\end{figure}

\paragraph{Demo A: Intra-Node Multi-GPU (Intra-MG).}
In Intra-MG, the draft server hosting $M_q$ is placed on one GPU, while the target instances $\{M_p^{(i)}\}$ are placed on other GPUs within the same node.
A representative configuration is to place $M_q$ on a single GPU and run three target instances on the remaining GPUs of a 4-GPU server.
This topology matches the \emph{machine-internal cross-GPU} scenario and is a natural fit when co-locating $M_q$ and $M_p^{(i)}$ on the same device is infeasible due to memory pressure.
Since all communication stays within one machine, StarSD can exploit high-bandwidth intra-host GPU transport, e.g., NVLink, to reduce transfer overhead. The real-world setup is shown in Fig.~\ref{fig:real_deploy}(\textit{a}).

\paragraph{Demo B: Inter-Node Cluster (Inter-CL).}
In Inter-CL, the draft server and target instances are placed on different machines, corresponding to the \emph{cross-machine} deployment over a network.
A representative configuration is to host the draft instance alongside a subset of target instances on a single multi-GPU server, e.g., one GPU for draft and three GPUs for targets, while placing additional target instances on another server equipped with a different GPU type to form a heterogeneous cluster.
Because the dataflow must traverse the network, we use a socket-based messaging layer for portability across commodity Ethernet and heterogeneous cluster environments, while keeping the StarSD execution logic unchanged.
More specialized transports, e.g., RDMA or cluster-specific collectives, can be integrated as an optimization, but are not required for StarSD to function. The real-world setup is shown in Fig.~\ref{fig:real_deploy}(\textit{b}).

\paragraph{Demo C: Edge-Server Hybrid (Inter-ES).}
To demonstrate \emph{cross-device} portability, we also include an edge-server deployment in which one side runs on an embedded edge device, e.g., a Jetson, and the other side runs on a desktop/server GPU machine.
This scenario stresses heterogeneous compute capability and network constraints, and therefore uses the same socket-based messaging layer as Inter-CL. The real-world setup is shown in Fig.~\ref{fig:real_deploy}(\textit{c}).

\begin{figure*}[t]
    \centering
    \begin{minipage}[t]{0.19\textwidth}
        \centering
        \includegraphics[width=\linewidth]{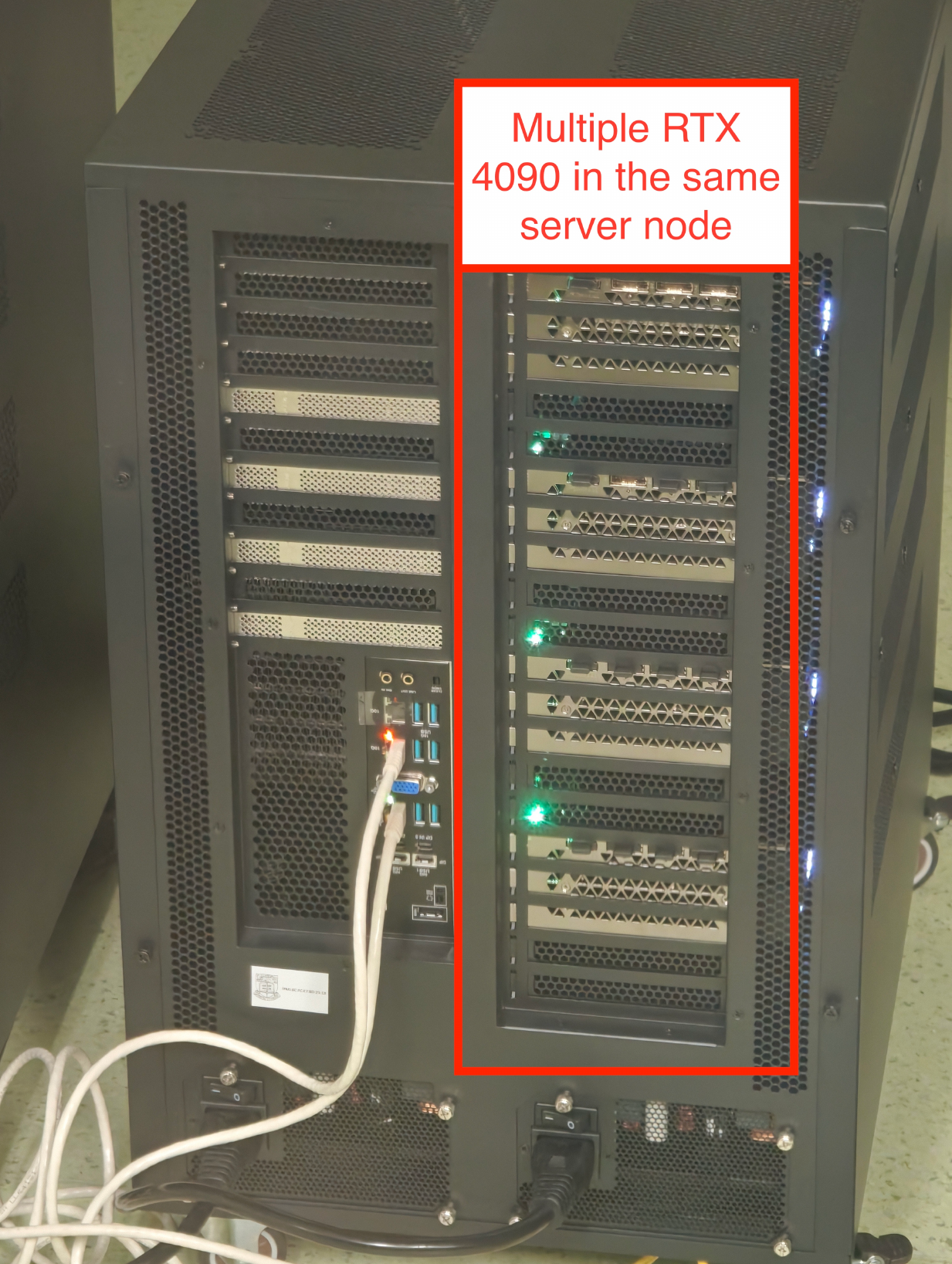}
        {\small (a) Intra-MG}
         % at $T_{\text{idle}}=0.03\,$s
    \end{minipage}
    \hfill
    \begin{minipage}[t]{0.34\textwidth}
        \centering
        \includegraphics[width=\linewidth]{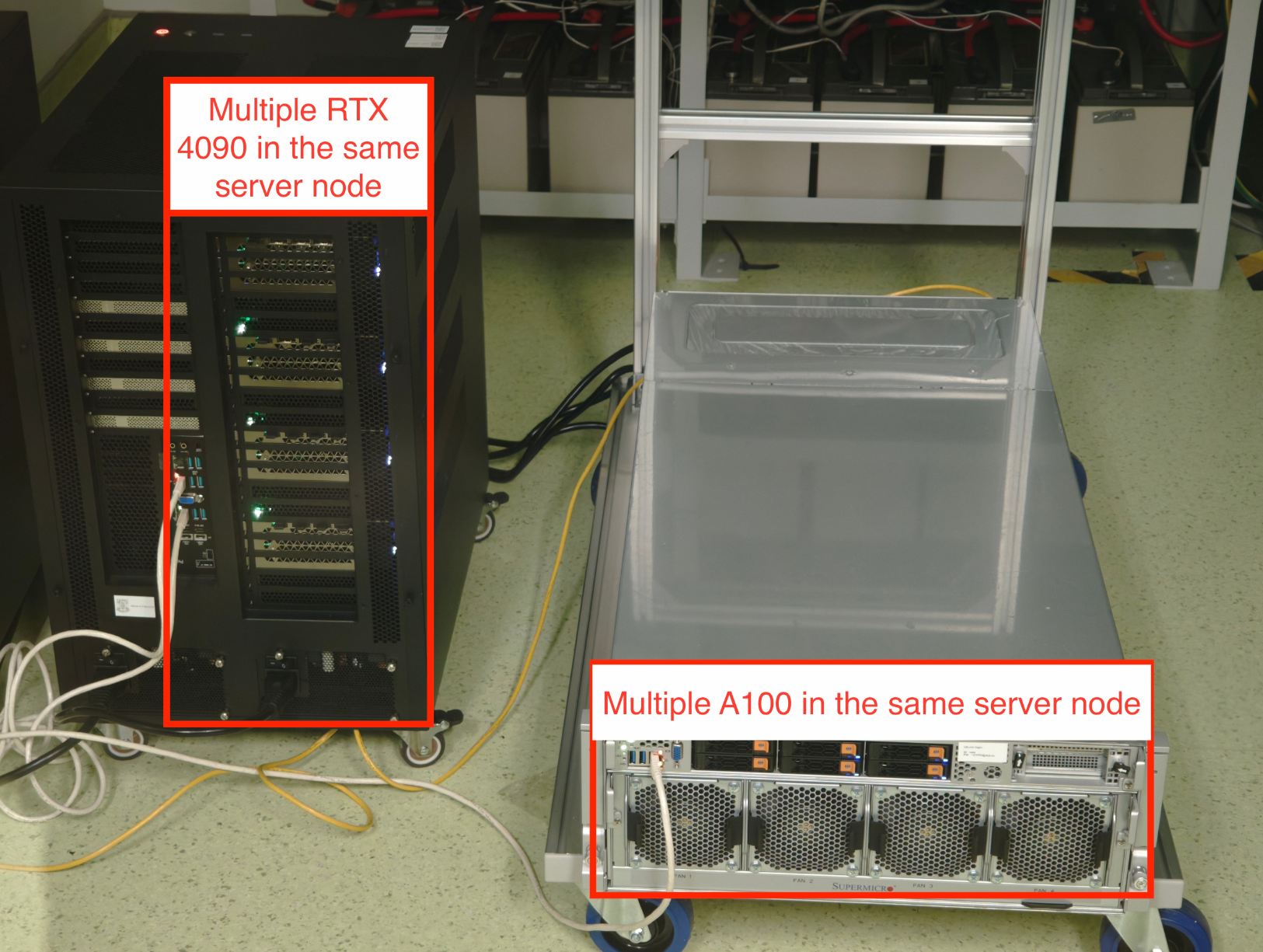}
        {\small (b) Inter-CL}
    \end{minipage}
    \hfill
    \begin{minipage}[t]{0.45\textwidth}
        \centering
        \includegraphics[width=\linewidth]{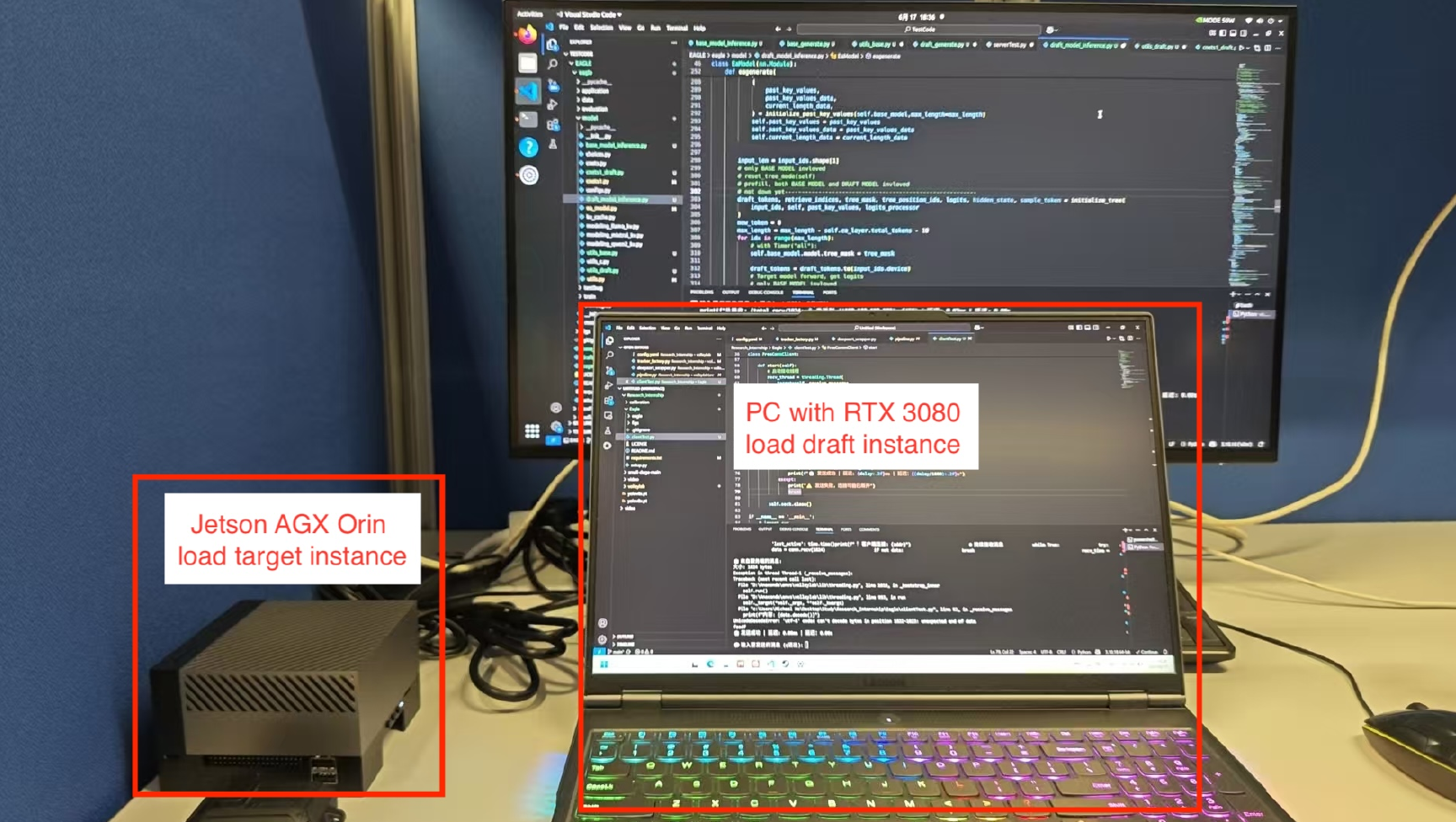}
        {\small (c) Inter-ES}
    \end{minipage}
    \caption{Photograph of the inter-server node deployment, the intra-server node deployment, and the Jetson-PC hybrid deployment.}
    \label{fig:real_deploy}
\end{figure*}

\section{Limitations and Future Directions}
\label{app:future}

\paragraph{Limitations.}
While StarSD demonstrates that a one-for-many speculative decoding framework can reduce draft-side idle gaps and improve overall utilization, the current design has several limitations.

\begin{itemize}
    \item \textbf{Fairness and per-target quality of service are not explicit goals.}
    StarSD is primarily optimized for high utilization and aggregate throughput.
    Without explicit fairness constraints, e.g., per-target rate limits, weighted scheduling, or latency service level objective (SLOs), some targets may experience worse tail latency under contention.

    \item \textbf{State isolation and memory pressure at scale.}
    Maintaining per-\texttt{tag} KV caches and speculative metadata is necessary for correctness, but it increases the memory footprint at the draft side.
    With many concurrent sessions, cache growth and fragmentation can reduce effective capacity and complicate eviction policies.

    \item \textbf{Hardware heterogeneity is under-explored.}
    When targets run on heterogeneous accelerators, e.g., different GPUs or edge devices, the variance in verification time increases, and scheduling becomes more challenging.
    The current system does not explicitly optimize for heterogeneous service rates or energy trade-offs.
\end{itemize}

\paragraph{Future directions.}
We see multiple promising directions to extend StarSD beyond the current one-for-$N$ design.

\paragraph{Toward an M-for-N draft-target fabric.}
A natural extension is an \emph{M-for-N} framework, where multiple draft instances $\{M_q^{(j)}\}_{j=1}^{N_q}$ collaboratively serve multiple target instances $\{M_p^{(i)}\}_{i=1}^{N_p}$.
Compared with one-for-$N$, the key objective is to remove the single-draft bottleneck while preserving high utilization and correctness under asynchronous returns.
Concretely, this raises several system and algorithmic questions.
(i) \textbf{Dynamic pairing and load balancing:} route each incoming verified prefix to a suitable draft instance based on queue length, predicted service time, and recent acceptance statistics.
(ii) \textbf{State placement:} decide whether per-\texttt{tag} KV caches live with a fixed draft instance, migrate across drafts, or reside in a shared cache layer.
(iii) \textbf{Consistency and replay:} ensure correctness when requests may be retried or migrated across drafts, requiring idempotent protocols and explicit versioning of speculative metadata.
(iv) \textbf{Scheduling with heterogeneity:} incorporate heterogeneous compute and network costs to achieve both high throughput and bounded tail latency.

\paragraph{Adaptive and learning-based control.}
Another direction is to make key system parameters adaptive to workload dynamics and latency SLOs, including the speculation depth $d$, batching policy, and routing strategy.
For example, the system can adjust $d$ online using lightweight predictors of acceptance length and verification latency, or learn scheduling policies that optimize a multi-objective target, e.g., throughput, latency, fairness, and energy.

%%%%%%%%%%%%%%%%%%%%%%%%%%%%%%%%%%%%%%%%%%%%%%%%%%%%%%%%%%%%%%%%%%%%%%%%%

\end{document}